\providecommand{\tabularnewline}{\\}
 \definecolor{BLACK}{gray}{0}
 \definecolor{WHITE}{gray}{1}
 \definecolor{RED}{rgb}{1,0,0}
 \definecolor{GREEN}{rgb}{0,1,0}
 \definecolor{BLUE}{rgb}{0,0,1}
 \definecolor{CYAN}{cmyk}{1,0,0,0}
 \definecolor{MAGENTA}{cmyk}{0,1,0,0}
 \definecolor{YELLOW}{cmyk}{0,0,1,0}
\numberwithin{equation}{section}
\numberwithin{figure}{section}
  \theoremstyle{plain}
  \newtheorem*{lem*}{\protect\lemmaname}
  \providecommand{\lemmaname}{Lemma}
\begin{document}

\title{Supersymmetric lattice fermions on the triangular lattice: Superfrustration
and criticality}

\author{L. Huijse}

\affiliation{Department of Physics, Harvard University, Cambridge MA 02138, USA}

\author{D. Mehta}

\affiliation{Physics Department, Syracuse University, Syracuse, NY 13244, USA}

\author{N. Moran}

\affiliation{Laboratoire Pierre Aigrain, ENS and CNRS, 24 rue Lhomond, 75005 Paris,
France}

\affiliation{Department of Mathematical Physics, National University of Ireland,
Maynooth, Ireland}

\author{K. Schoutens}

\affiliation{Institute for Theoretical Physics, University of Amsterdam, Science
Park 904, P.O.Box 94485, 1090 GL Amsterdam, The Netherlands}

\author{J. Vala}

\affiliation{Department of Mathematical Physics, National University of Ireland,
Maynooth, Ireland }

\affiliation{School of Theoretical Physics, Dublin Institute for Advanced Studies,
10 Burlington Road, Dublin 4, Ireland}
\begin{abstract}
We study a model for itinerant, strongly interacting fermions where
a judicious tuning of the interactions leads to a supersymmetric Hamiltonian.
On the triangular lattice this model is known to exhibit a property
called superfrustration, which is characterized by an extensive ground
state entropy. Using a combination of numerical and analytical methods
we study various ladder geometries obtained by imposing doubly periodic
boundary conditions on the triangular lattice. We compare our results
to various bounds on the ground state degeneracy obtained in the literature. For all systems we
find that the number of ground states grows exponentially with system
size. For two of the models
that we study we obtain the exact number of ground states by solving
the cohomology problem. For one of these, we find that via a sequence of mappings
the entire spectrum can be understood. It exhibits a gapped phase
at 1/4 filling and a gapless phase at 1/6 filling and phase separation
at intermediate fillings. The gapless phase separates into an exponential
number of sectors, where the continuum limit of each sector is described
by a superconformal field theory. 
\end{abstract}
\maketitle

\section{Introduction}

The supersymmetric model for lattice fermions was first introduced
in \cite{fendley-2003-90} and has subsequently seen an intensive
follow up in both the physics and mathematics literature (for a fairly
recent overview see \cite{HuijseT10} and references therein). The
key property of this model, namely supersymmetry, has the remarkable
consequence that exact results can be obtained in the non-perturbative
regime of a model for strongly interacting itinerant fermions.

These exact results have revealed a wide range of interesting phenomena
in this system. On one dimensional lattices one can typically find
quantum criticality, which in combination with supersymmetry, gives
rise to an effective continuum limit description in terms of a superconformal
field theory \cite{fendley-2003-90,fendley-2003-36,Beccaria05,Huijse11}.
On general lattices the system furthermore exhibits a strong form
of quantum charge frustration called superfrustration \cite{fendley-2005-95,vanEerten05,Huijse08a}.
This property is characterized by an extensive ground state entropy.
Other phenomena, such as edge modes, possible topological order and
Rokhsar-Kivelson like quantum liquid phases were also observed for
this model \cite{Huijse08b,Huijse10b,Huijse11b}. Finally, a surprising
connection between the supersymmetric model and the XYZ spin chain
along a certain line of couplings has made it possible using supersymmetry
to reinterpret some known properties and derive new results for the
spin chain \cite{Fendley10a,Fendley10b,Hagendorf11}.

On the Mathematics side the study of supersymmetric lattice models
has led to interesting results on the cohomology of independence complexes
of lattices and graphs, 2D grids in particular \cite{Fendley05,Jonsson05p,Jonsson06,Jonsson06p,Baxter07,Bousquet-Melou08,Jonsson09,Engstrom09,Csorba09,Huijse10,Adam11}.
The two sides are connected by the observation that quantum ground
states of the supersymmetric lattice model are in 1-to-1 correspondence
with the elements of the cohomology of an associated independence
complex.

In this paper we focus on the supersymmetric model on the triangular
lattice. The ground state structure of this model is not fully understood.
Various techniques that were successfully used to explore the ground
state structure on other lattices have failed so far for this lattice.
Nevertheless, a number of non-trivial results have been established
for this model. The numerical computation of a lower bound on the
number of ground states has shown that the system has an extensive
ground state entropy \cite{vanEerten05}. Furthermore, analytic results
on cohomology elements have revealed that degenerate quantum ground
states occur in the entire range between $1/7$ and $1/5$ filling
\cite{Jonsson05p}. Finally, an upper bound on the number of quantum
ground states was also obtained analytically \cite{Engstrom09}.

In this work we focus on ladder realizations of the triangular lattice
by imposing doubly periodic boundary conditions (see figure \ref{fig:tribc}).
Using a combination of numerical and analytical techniques we study
the ground state structure of these systems. We compare our results
to the bounds that were previously established. For all systems we
find that the number of ground states grows exponentially with system
size. We numerically confirm that the range of filling for which zero
energy states exist agrees with the range obtained by Jonsson. Furthermore,
we prove analytically that there are no zero energy states outside
the range between $1/8$ and $1/4$ filling. For two of the models
that we study we obtain the exact number of ground states by solving
the cohomology problem. 

For the simplest ladder, the 2 leg ladder, we uncover a symmetry that
allows us to understand the origin of the exponential ground state
degeneracy. This local $\mathbb{Z}_{2}$ symmetry distinguishes an
exponential number of sectors. Subsequently, we present a mapping
from the ladder to the chain such that the entire spectrum can be
understood. Remarkably, in certain sectors the spectrum turns out
to be gapped, in others it is gapless and in yet other sectors we
find phase separation. The continuum limit of each of the gapless
sectors is shown to be described by a superconformal field theory
with central charge $c=1$.

The paper is organized as follows. In the remainder of this section,
we define the supersymmetric model, discuss the analytic and numerical
techniques that we employ and, finally, summarize the results on the
ground state degeneracy of the supersymmetric model on the triangular
lattice that were established previously. In section \ref{sec:all},
we present the numerical results we obtained for systems up to 54
sites and discuss how they relate to the known bounds and what they
imply for the full 2D triangular lattice. In section \ref{sec:3-4},
we focus on the 3 and 4 leg ladders for which we obtain a number of
analytical results on the ground state structure. Finally, in section
\ref{sec:2} we present the full solution of the 2 leg ladder.

\begin{figure}
\includegraphics[height=3.5cm]{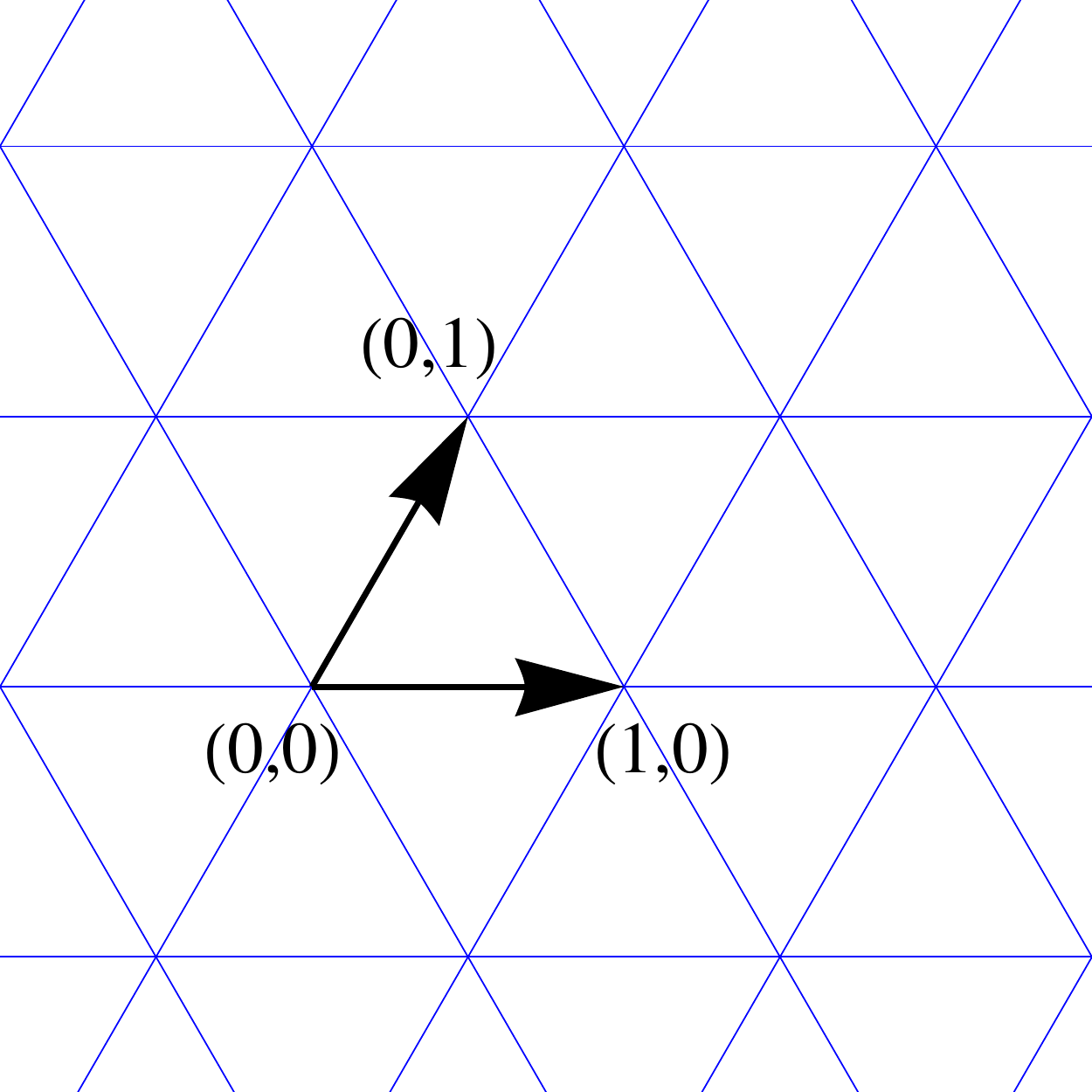}

\caption{In this paper we consider the supersymmetric model on the triangular
lattice with doubly periodic boundary conditions. For the $M\times N$
triangular lattice the boundary conditions are imposed by requiring
periodicity along two directions: $\vec{u}=(0,M)$ and $\vec{v}=(N,0)$.
The figure shows the two unit vectors along which the periodicities
are imposed.\label{fig:tribc}}

\end{figure}

\subsection{The model}

Supersymmetry is a symmetry between fermionic and bosonic degrees
of freedom (see \cite{Bagger03} for a general reference). It plays
an important role in theoretical high energy physics, where various
theories that go beyond the standard model require supersymmetry for
a consistent formulation. In these theories all the known elementary
particles are accompanied by yet to be discovered superpartners.

In the lattice model discussed here the physical particles are spinless
lattice fermions and the supersymmetry relates fermionic and bosonic
many particle states with an odd and even number of the lattice fermions,
respectively. In the $\mathcal{N}=2$ supersymmetric theories that
we consider, a central role is played by the operators $Q$ and $Q^{\dag}$,
called supercharges, which have the following properties \cite{Witten82} 
\begin{itemize}
\item $Q$ adds one fermion to the system and $Q^{\dag}$ takes out one
fermion from the system. 
\item The supercharges are fermionic operators and thus nilpotent: $Q^{2}=(Q^{\dag})^{2}=0$. 
\item The hamiltonian is the anti-commutator of the supercharges, $H=\{Q,Q^{\dag}\}$,
and as a consequence it commutes with the supercharges and conserves
the number of fermions in the system. 
\end{itemize}
Imposing this structure has some immediate consequences: supersymmetric
theories are characterized by a positive definite energy spectrum
and a twofold degeneracy of each non-zero energy level. The two states
with the same energy are called superpartners and are related by the
supercharge.

From its definition it follows directly that $H$ is positive definite:
\begin{eqnarray*}
\langle\psi|H|\psi\rangle & = & \langle\psi|(Q^{\dag}Q+QQ^{\dag})|\psi\rangle\\
 & = & |Q|\psi\rangle|^{2}+|Q^{\dag}|\psi\rangle|^{2}\geq0\ .
\end{eqnarray*}

Furthermore, both $Q$ and $Q^{\dag}$ commute with the Hamiltonian,
which gives rise to the twofold degeneracy in the energy spectrum.
In other words, all eigenstates with an energy $E_{s}>0$ form doublet
representations of the supersymmetry algebra. A doublet consists of
two states $(|s\rangle,Q|s\rangle)$, such that $Q^{\dag}|s\rangle=0$.
Finally, all states with zero energy must be singlets: $Q|g\rangle=Q^{\dag}|g\rangle=0$
and conversely, all singlets must be zero energy states \cite{Witten82}.
In addition to supersymmetry our models also have a fermion-number
symmetry generated by the operator $F$ with 
\begin{equation}
[F,Q^{\dag}]=-Q^{\dag}\quad\textrm{and}\quad[F,Q]=Q.
\end{equation}
 Consequently, $F$ commutes with the Hamiltonian. Furthermore, this
tells us that superpartners differ in their fermion number by one
(let $F|s\rangle=f_{s}|s\rangle$, then $F(Q|s\rangle)=Q(F+1)|s\rangle=(f_{s}+1)(Q|s\rangle)$).

An important issue is whether or not supersymmetric ground states
at zero energy occur, that is, whether there are singlet representations
of the algebra. For this one considers the Witten index 
\begin{equation}
W=\hbox{tr}\left[(-1)^{F}e^{-\beta H}\right]\ ,\label{eq:windex}
\end{equation}
 where the trace is over the entire Hilbert space. Remember that all
excited states come in doublets with the same energy and differing
in their fermion-number by one. This means that in the trace all contributions
of excited states will cancel pairwise, and that the only states contributing
are the zero energy ground states. We can thus evaluate $W$ in the
limit of $\beta\rightarrow0$, where all states contribute $(-1)^{F}$.
It also follows that $|W|$ is a lower bound to the number of zero
energy ground states.

We now make things concrete and define a supersymmetric model for
spin-less fermions on a lattice, following \cite{fendley-2003-90}.
The operator that creates a fermion on site $i$ is written as $c_{i}^{\dag}$
with $\{c_{i}^{\dag},c_{j}\}=\delta_{ij}$. To obtain a non-trivial
Hamiltonian, we dress the fermion with a projection operator: $P_{<i>}=\prod_{j\textrm{ next to }i}(1-c_{j}^{\dag}c_{j})$,
which requires all sites adjacent to site $i$ to be empty. With $Q=\sum c_{i}^{\dag}P_{<i>}$
and $Q^{\dag}=\sum c_{i}P_{<i>}$, the Hamiltonian of these hard-core
fermions reads 
\begin{equation}
H=\{Q^{\dag},Q\}=\sum_{i}\sum_{j\textrm{ next to }i}P_{<i>}c_{i}^{\dag}c_{j}P_{<j>}+\sum_{i}P_{<i>}.
\end{equation}
 The first term is just a nearest neighbor hopping term for hard-core
fermions, the second term contains a next-nearest neighbor repulsion,
a chemical potential and a constant. The details of the latter terms
will depend on the lattice we choose. In particular, for the triangular
lattice we find 
\begin{equation}
\sum_{i}P_{<i>}=N-6F+\underset{i}{\sum}V_{\langle i\rangle},
\end{equation}
 where $V_{\langle i\rangle}+1$ is the number of particles adjacent
to site $i$.

\subsection{Methods}

\subsubsection{Cohomology techniques\label{sub:Cohomology-techniques}}

For the supersymmetric models, cohomology has proven to be a very
powerful tool to extract information about the zero energy ground
state(s) of the models (see for example \cite{fendley-2003-90,fendley-2005-95,Jonsson05p,Huijse08b,Huijse10,HuijseT10}
and references therein). The key ingredient is the fact that ground
states are singlets; they are annihilated both by $Q$ and $Q^{\dag}$.
This means that a ground state $|g\rangle$ is in the kernel of $Q$:
$Q|g\rangle=0$ and not in the image of $Q$, because if we could
write $|g\rangle=Q|f\rangle$, then $(|f\rangle,|g\rangle)$, would
be a doublet. So the ground states span a subspace $\mathcal{H}_{Q}$
of the Hilbert space $\mathbf{H}$ of states, such that $\mathcal{H}_{Q}=\ker Q/\textrm{Im}Q$.
This is precisely the definition of the cohomology of $Q$. So the
ground states of a supersymmetric theory are in one-to-one correspondence
with the cohomology of $Q$. It follows that the solution of the cohomology
problem gives the number of zero energy states for each particle number
sector. Equivalently, we find that zero energy states are in one-to-one
correspondence with the homology elements of $Q^{\dagger}$.

We compute the cohomology using the `tic-tac-toe' lemma of \cite{BottTu82}.
This says that under certain conditions, the cohomology $\mathcal{H}_{Q}$
for $Q=Q_{1}+Q_{2}$ is the same as the cohomology of $Q_{1}$ acting
on the cohomology of $Q_{2}$. In an equation, $\mathcal{H}_{Q}=\mathcal{H}_{Q_{1}}(\mathcal{H}_{Q_{2}})\equiv\mathcal{H}_{12}$,
where $Q_{1}$ and $Q_{2}$ act on different sublattices $S_{1}$
and $S_{2}$. We find $\mathcal{H}_{12}$ by first fixing the configuration
on all sites of the sublattice $S_{1}$, and computing the cohomology
$\mathcal{H}_{Q_{2}}$. Then one computes the cohomology of $Q_{1}$,
acting not on the full space of states, but only on the classes in
$\mathcal{H}_{Q_{2}}$. A sufficient condition for the lemma to hold
is that all non-trivial elements of $\mathcal{H}_{12}$ have the same
$f_{2}$ (the fermion-number on $S_{2}$).

Although ground states are in one-to-one correspondence with cohomology
elements, the two are not equal unless the cohomology element happens
to be a harmonic representative of the cohomology. Harmonic representatives
are elements of both the cohomology of $Q$ and the homology of $Q^{\dag}$.
So they are annihilated by both supercharges, which is precisely the
property of a zero energy state. It follows that, although the solution
of the cohomology problem gives the number of ground states, it typically
does not give the ground states themselves. For a more leisurely introduction
to cohomology and an exposition of the relation between the supersymetric
model and independence complexes we refer the reader to \cite{Huijse10,HuijseT10}.

We now briefly state the cohomology results for the one dimensional
chain \cite{fendley-2003-90}, because we will use these results many
times throughout the paper. For the periodic chain of length $L=3n+a$
the cohomology is trivial for all particle numbers, $f$, except for
$f=n$, where we have 
\begin{equation}
{\rm dim}(\mathcal{H}_{Q})=\begin{cases}
1 & \textrm{for \ensuremath{a=\pm1},}\\
2 & \textrm{for \ensuremath{a=0}.}
\end{cases}
\end{equation}
Similarly, for open boundary conditions we have
\begin{equation}
{\rm dim}(\mathcal{H}_{Q})=\begin{cases}
1 & \textrm{for \ensuremath{a=0} and \ensuremath{a=-1},}\\
0 & \textrm{for \ensuremath{a=1},}
\end{cases}
\end{equation}
for a chain of length $L=3n+a$ and $n$ particles and ${\rm dim}(\mathcal{H}_{Q})=0$
at all other particle numbers. In particular, we note that the cohomology
of an isolated site that can be both empty and occupied ($L=1$ with
open boundary conditions) is trivial. This is equivalent to the statement
that the single site chain has no zero energy states, indeed the empty
and occupied state form a doublet of energy $E=1$.

\subsubsection{Numerical methods}

To complement the analytical techniques we also investigated a range
of numerical methods. The goal of these techniques is to find accurately
the number of degenerate zero energy eigenvalues of a given Hamiltonian.
In principle there are many ways to compute this number. Here we describe
some algebraic approaches we tried and also the iterative diagonalization
method which was found to be the most effective.

1. Characteristic polynomial: Since the roots of the characteristic
polynomial of a given matrix are the eigenvalues of the matrix, it
is instructive to obtain the characteristic polynomial equation of
the above Hamiltonian and solve it. However, the characteristic polynomial
of a matrix bigger than $4\times4$ is a univariate of degree more
than $4$ which is known to be exactly non-solvable in terms of the
radicals of its coefficients as the Abel-Ruffini theorem states. This
means that we have to rely on numerical methods to solve univariate
polynomials for the Hamiltonians of sizes bigger than $4\times4$.
Such numerical methods, e.g., the Newton's method, are impractical
because even small round-off errors in the coefficients of the characteristic
polynomial can end up being a large error in the eigenvalues and hence
in the eigenvectors.

However, since in this specific problem we know that the ground state
eigenvalue is exactly zero and we thus only need to compute its degeneracy,
we may get away from the round-off error problem by only obtaining
the characteristic polynomial and studying its structure. In particular,
since we know that the lowest lying eigenvalues in our case are all
degenerate eigenvalues, we can do the following: obtain the characteristic
polynomial of the given Hamiltonian of size, say, $N$. Then, factorize
the univariate polynomial (in the variable, say, $x$) such that the
final form of the polynomial is $x^{m}(a_{k}x^{k}+...+a_{0}x^{0})\dots=0$,
where $m+k=N$. Here, $m$ is now the number of the degenerate lowest
lying eigenvalues. The problem here is that obtaining the characteristic
polynomial itself is a quite extensive task as the matrix size increases.
In particular, we could not easily go beyond the $4\times6$ triangular
lattice (where the largest matrix dimension is 1188) on a single computer.

Looking at the structure of the characteristic polynomial, we can
also deduce that we may just need the smallest $i$ in the characteristic
polynomial written in the generic form as $\sum_{i=0}^{i=N}a_{i}x^{i}$
for which $a_{i}\neq0$ since this $i$ is $m$ in the above notation,
i.e., the number of degenerate lowest lying eigenvalues. However,
again, since $a_{i}$ is the sum over all the $i\times i$ minors
of the matrix, the computation immediately blows up restricting us
again to the matrices of sizes as mentioned above.

2. Rank of the matrix: the rank $r$ of a matrix of dimension $N$
is less than $N$ if there are $N-r$ zero modes. Since our Hamiltonian
is known to have zero eigenvalues as the degenerate lowest lying eigenvalues,
$N-r$ is the number of lowest lying eigenvalues in this case. This
means that we just need to compute the rank of our Hamiltonian. However,
the computation of the rank of a matrix is an extremely demanding
calculation. We tried the singular value decomposition (the in-built
routine of Matlab) and also computing the rank directly using various
methods including the recently developed so-called numerical rank-revealing
method (with the RankRev package) \cite{lee:503,Li:2005:RMU:1064629.1064645}.
The latter method did perform slightly better compared to the singular
value decomposition routine from Matlab. The method indeed correctly
computes the number of degenerate eigenvalues, but scales exponentially
with increasing matrix size and hence is not effective for even moderately
sized lattices.

3. Row-Echelon Form: Another way of computing the number of degenerate
eigenvalues of a given matrix is to transform the matrix into Row-Echelon
form and then to count the number of zero rows it has. Again, obtaining
the Row-Echelon form of a matrix becomes a very difficult task since
the number of steps in the algorithm increases exponentially with
the dimension of the matrix. In particular, for the $4\times4$ lattice
(matrix dimension of 96) this method completes successfully instantaneously
(using Matlab's rref command), whereas already for the $4\times6$
lattice (matrix dimension 1188) this takes hours.

4. Iterative diagonalization: Here, we used the Krylov-Schur algorithm
implementation from the SLEPc package \cite{doc:slepc} to compute
the lowest lying eigenvalues. The DoQO \cite{doqo} (Diagonalisation
of Quantum Observables) toolkit was used to build the matrices and
call the SLEPc routines. This method was chosen for its ability to
resolve large amounts of degenerate states which is not possible with
the standard Lanczos algorithm. These methods scale linearly with
the dimension of the matrix, but exponentially with the dimension
of the degenerate subspace one is trying to resolve. In addition,
unlike the techniques described above, it is not always guaranteed
that all the degenerate zero energy states will be found. In particular
if the gap to the next excited state is too small or the dimension
of the subspace within which the algorithm works is too small then
the algorithm will not be able pick up all the zero energy degenerate
states. The subspace can be increased but at the expense of exponentially
more computational resources.

In our calculations we have taken steps to overcome these problems.
Firstly we exploit the symmetries of the Hamiltonian including particle
conservation and space group symmetries to drastically reduce the
dimension of the matrices involved and also the dimension of the degenerate
zero energy subspace for each matrix. As well as this we check to
ensure that the Witten index we compute matches that calculated via
transfer matrix methods. This offers a very good check to ensure that
we accurately account for all the zero energy ground states. Using
these techniques we have been able to successfully calculate the numbers
of degenerate ground states for triangular lattices with up to 54
sites (see appendix \ref{sec:Numerical-data}) corresponding to the
6x9 triangular lattice (the full Hilbert space dimension here is on
the order of 100 million. With filling and momentum conservation the
largest matrix is on the order of 50 thousand).

In the methods $1-3$, we do not need to know the Witten index at
all. But there the computation becomes heavy. The methods $1$ and
$3$ can be used for the symbolic matrices too. So if there are some
parameters involved in the Hamiltonian, for example, for the staggered
matrices, we can still find the row-echelon form or the characteristic
polynomial of the corresponding matrices. This is not the case for
the purely numerical methods $2$ and $4$.

All in all, in our experiments with different methods, we conclude
that the iterative diagonalization method scales better than any of
the other methods we tried.

\subsection{Summary of known results for the triangular lattice\label{sub:SummaryTri}}

In this section we review results on the ground state structure of
the supersymmetric model on the triangular lattice that were previously
established. For further details we refer the reader to the original
publications. As was mentioned in the introduction a lower bound on
the number of zero energy states was obtained in \cite{vanEerten05}
by the numerical computation of the Witten index (\ref{eq:windex})
for the $M\times N$ triangular lattice with periodic boundary conditions
applied along two axes of the lattice. The numerical results immediately
indicate an exponential growth of the absolute value of the Witten
index. To quantify the growth behavior, the largest eigenvalue $\lambda_{N}$
of the row-to-row transfer matrix for the Witten index on size $M\times N$
was determined. For the triangular lattice this gives \cite{vanEerten05}
\begin{eqnarray*}
 &  & |W_{M,N}|\sim(\lambda_{N})^{M}+(\bar{\lambda}_{N})^{M}\ ,\quad\lambda_{N}\sim\lambda^{N}\\
 &  & |\lambda|\sim1.14\ ,\quad{\rm arg}(\lambda)\sim0.18\times\pi
\end{eqnarray*}
 leading to a ground state entropy per site of 
\begin{eqnarray*}
 & \frac{S_{{\rm GS}}}{MN}\geq\frac{1}{MN}\log|W_{M,N}|\sim\log|\lambda|\sim0.13\ .
\end{eqnarray*}
 The argument of $\lambda$ indicates that the asymptotic behavior
of the index is dominated by configurations with filling fraction
around $\nu=F/(MN)=0.18$.

There are two main results for bounds on the total number of zero
energy states for the triangular lattice. Both results were obtained
by considering the (co)homology problem for the independence complex
associated to the triangular lattice (see section \ref{sub:Cohomology-techniques}).
The first result, obtained by Jonsson \cite{Jonsson05p}, proves that
zero energy ground states exist in a certain range of filling. He
constructs a certain type of non-trivial homology element called cross-cycles.
The size of a cross-cycle refers to the number of occupied sites in
the homology element. If a cross-cycle of size $k$ exists, it follows
that the homology for $k$ particles is non-vanishing. Using the relation
between (co)homology elements and quantum ground states it follows
that there exists at least one zero energy state with $k$ particles.
Jonsson obtains a bound on the size of the cross-cycles and this results
in a bound on the particle numbers for which the homology is non-vanishing.
That is, there is a set of rational numbers $r$, such that there
exist cross-cycles of size $rN$, where $N$ is the number of vertices
of the two-dimensional lattice. For the triangular lattice it is found
that $r\in[\frac{1}{7},\frac{1}{5}]\cap\mathbb{Q}$. In other words,
there are zero energy ground states in the entire range between $1/7$
and $1/5$ filling.

Let us give the specific form of a cross-cycle $z$ of size $k$: 
\begin{itemize}
\item $z=\prod_{i=1}^{k}(|a_{i}\rangle-|b_{i}\rangle)$ such that $z$ is
a state in the Hilbert space with $k$ particles, that is the $a_{i}$
and $b_{i}$ obey the hard-core condition. 
\item Furthermore, there is at least one configuration in $z$ such that
all sites in the lattice are either occupied or adjacent to at least
one occupied site. This is called a maximal independent set. 
\item Finally, $a_{i}$ is adjacent to $b_{i}$. 
\end{itemize}
Note that, in this case, we consider the homology and not the cohomology.
It is easily verified that $z$ belongs to the kernel of $Q^{\dag}$,
since $Q^{\dag}$ gives zero on each term in the product:

\begin{eqnarray*}
 & Q^{\dag}(|a_{i}\rangle-|b_{i}\rangle)=|\emptyset\rangle-|\emptyset\rangle=0.
\end{eqnarray*}
 The latter two conditions ensure that $z$ is not in the image of
$Q^{\dag}$: the second condition ensures that there is no site $c$
such that $Q^{\dag}|c\rangle\prod_{i=1}^{k}(|a_{i}\rangle-|b_{i}\rangle)=z$
and the third condition ensures that $|a_{j}\rangle|b_{j}\rangle\prod_{i\neq j}(|a_{i}\rangle-|b_{i}\rangle)$
violates the hard-core condition.

Clearly, the latter two conditions, combined with the hard-core condition,
impose certain bounds on the size of a cross-cycle. For the triangular
lattice the size of the cross-cycles is at most a fifth of all the
sites in the lattice and at least a seventh \cite{Jonsson05p}.

The second result that imposes a bound on the (co)homology on the
triangular lattice, was obtained by Engstr\"om \cite{Engstrom09}. He
finds an upper bound to the total dimension of the cohomology for
general graphs $G$ using discrete Morse theory. He proves that if
$G$ is a graph and $D$ a subset of its vertex set such that $G\setminus D$
is a forest, then $\sum_{n}\hbox{dim}(\mathcal{H}_{Q}^{(n)})\leq|\hbox{Ind}(G[D])|$.
Here $\mathcal{H}_{Q}^{(n)}$ is the cohomology of $Q$ on the Hilbert
space spanned by all hard-core particle configurations with $n$ particles
on the graph $G$ and $\hbox{Ind}(G[D])$ is the Hilbert space spanned
by all hard-core particle configurations on the subset $D$. From
this theorem it follows that finding the minimal set of vertices that
should be removed from $G$ to obtain a forest, gives an upper bound
on the total dimension of $\mathcal{H}_{Q}$ and thus on the total
number of zero energy ground states for the supersymmetric model on
the graph $G$. For the triangular lattice of size $2m\times n$ the
upper bound was found to be approximately $\phi^{mn}$, with $\phi=\frac{1}{2}(1+\sqrt{5})$,
the golden ratio.

\section{Results on ground states of the $M\times N$ triangular lattice\label{sec:all}}

In this section we discuss what we can learn about the ground state
degeneracy on the triangular lattice in general from our numerical
results. The total number of zero energy states in each particle number
sector was computed numerically for lattices with up to 54 sites (see
appendix \ref{sec:Numerical-data}). We compare these results to the
upper and lower bounds established by Engstr\"om and van Eerten, respectively.
We also compare the fillings for which we found zero energy states
to the bounds obtained by Jonsson.

Clearly, one has to be careful drawing conclusions for the full two
dimensional triangular lattice from relatively small systems. We have
decided to exclude the ladder geometries of size $2\times L$ and
$3\times L$ in this analysis for the following reasons. First of
all, in these geometries the particles cannot hop past each other
due to the nearest neighbor exclusion. One readily checks that the
exclusion rule constraints the configurations to a maximum of one
particle per rung. Second of all, the lower bound obtained from the
Witten index for these geometries does not agree with the asymptotic
value for the triangular lattice to within the error bars \cite{vanEerten05}.

The results for the ladder geometries of size $4\times L$, $5\times L$,
$6\times L$ and $7\times L$ are shown in figure \ref{fig:NgsAll}.
A fit to the data suggests that the total number of ground states
on the triangular lattice of size $M\times N$ grows as 
\begin{equation}
N_{\textrm{gs}}\sim1.15^{MN}.
\end{equation}
 This result is in agreement with the upper and lower bounds given
by $(\sqrt{\phi})^{MN}\sim1.27^{MN}$ and $1.14^{MN}$, respectively.

\begin{figure}
\includegraphics[width=0.7\textwidth]{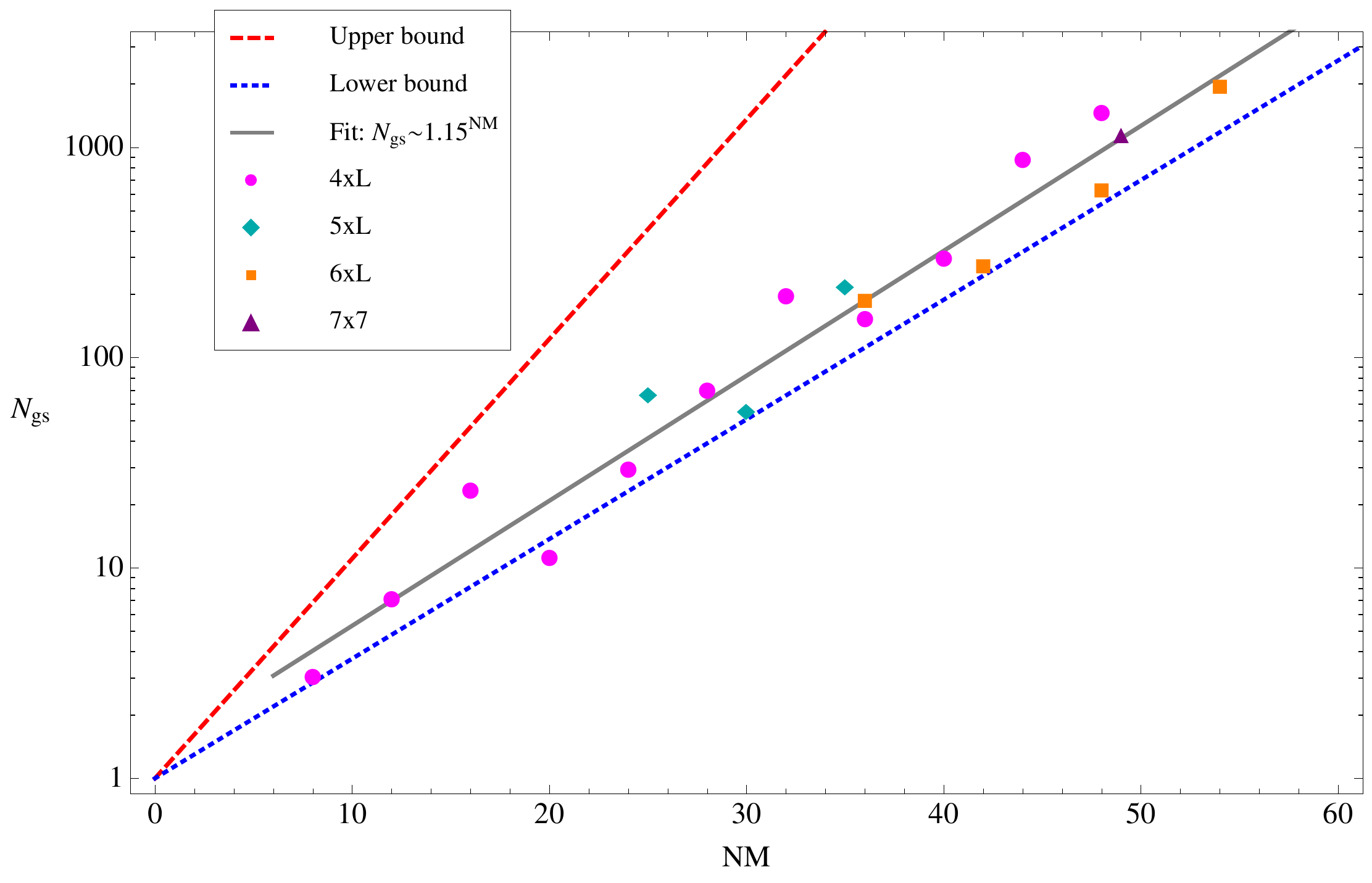}\caption{We plot the total number of ground states for ladder geometries of
size $4\times L$ (circles), $5\times L$ (diamonds), $6\times L$
(squares) and $7\times L$ (triangles) on a logarithmic scale as a
function of the total number of sites. The drawn, gray line is a fit
to the data. The dotted, blue line is the lower bound and the dashed,
red line is the upper bound.\label{fig:NgsAll}}
\end{figure}

We now turn to the range of fillings for which zero energy states
exist. As explained in section \ref{sub:SummaryTri}, Jonsson obtained
a minimal range of fillings for which zero energy states exist. We
now present a maximum on this range. 
\begin{lem*}
The cohomology of the $M\times N$ triangular lattice is trivial for
all fillings $\nu>1/4$ and $\nu<1/8$.\end{lem*}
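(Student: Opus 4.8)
The plan is to derive both inequalities from the tic-tac-toe lemma, reducing the torus to one-dimensional pieces whose cohomology is exactly the chain cohomology recalled above. A clean first move uses that the triangular lattice is properly $3$-colorable (when $M,N$ are compatible with the coloring; a minor variant is needed otherwise), with color classes $A,B,C$, each an independent set. Take $S_2=C$ and $S_1=A\cup B$: after fixing an occupation $\sigma$ of $S_1$, the restricted charge $Q_2$ is a sum of mutually commuting operators $c_w^\dagger$ over those $w\in C$ all of whose neighbors are empty in $\sigma$. Since an isolated occupiable site has trivial cohomology (the $L=1$ open chain), $\mathcal{H}_{Q_2}(\sigma)=0$ unless $\sigma$ dominates $C$, and when it does $\mathcal{H}_{Q_2}(\sigma)$ is one-dimensional and sits at $f_2=0$; hence the sufficient condition for the lemma holds and $\mathcal{H}_{Q}$ is computed by $Q_1$ --- the supercharge of the induced, much sparser, graph $G[A\cup B]$ (a $3$-regular bipartite, in fact hexagonal, lattice) --- acting on configurations of $A\cup B$ that dominate $C$. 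Iterating with $S_2=A$ and $S_2=B$, and then with row- or zigzag-type cuts for which $Q_2$ genuinely acts on chains, is the basic mechanism behind the bounds.

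For the upper bound $\nu>1/4$: the reduction above already confines the cohomology to configurations that are, up to the remaining freedom, close to maximal independent sets. Choosing next a cut in which $S_2$ is a disjoint union of long periodic and open chains, a non-zero class of $\mathcal{H}_{Q_2}$ pins the particle number on $S_2$ to one third of the available length of $S_2$, up to the $\lceil\,\cdot/3\,\rceil$ roundings over the individual arcs, while the remaining $f_1=F-f_2$ particles must fit on $S_1$ subject both to the hard-core rule and to the requirement that they block every unoccupied site of $S_2$. Working through this count --- tracking how occupied $S_1$-sites fragment the $S_2$-chains and how the roundings accumulate --- is arranged to close at $F\le MN/4$. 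A crude version of the estimate gives only a constant larger than $1/4$, so the real content is organizing the bookkeeping, and adapting the cut to the residual filling, so that the sharp value emerges.

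For the lower bound $\nu<1/8$ one argues in reverse: if $F$ is too small there is no occupation meeting the domination constraints produced above, so in a suitable cut $S_2$ acquires a vertex whose closed neighborhood is entirely empty --- an isolated occupiable component --- whence $\mathcal{H}_{Q_2}=0$ and therefore $\mathcal{H}_{Q}=0$. Counting closed neighborhoods (each particle covers at most seven sites) already excludes $\nu<1/7$; the extra room down to $1/8$ comes from the finer constraint, inherited from the chain cohomologies, that in a surviving configuration the covered region must itself be compatible with one-third filling arc by arc, which fails when the particles are too sparse. The main obstacle in both directions is verifying the sufficient condition for the tic-tac-toe lemma --- that all non-trivial elements of $\mathcal{H}_{12}$ carry the same $f_2$ --- for the row- and zigzag-type cuts, where it is false on the full Hilbert space; I expect to have to work at fixed total filling throughout and to iterate the reduction through three or more sublattices, checking the condition stage by stage, exactly as in the explicit $3$- and $4$-leg ladder computations later in the paper. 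The secondary difficulty is purely combinatorial: extracting the precise constants $1/4$ and $1/8$, rather than the weaker ones a direct argument produces, from the arc-counting.
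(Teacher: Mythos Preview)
Your proposal is a plan rather than a proof: you explicitly concede that your three-coloring and iterated cuts give only $\nu\ge 1/7$ on the low side and ``a constant larger than $1/4$'' on the high side, and you defer the sharp constants to unspecified bookkeeping. That is the genuine gap. The paper obtains both $1/4$ and $1/8$ from a \emph{single} decomposition, with no iteration: take $S_1$ to be the even rows and $S_2$ the odd rows, so that $S_2$ is a disjoint union of periodic chains of length $M$. For a fixed $S_1$-configuration the chain cohomology determines $f_2$ exactly, and one simply reads off the range of $f_1+f_2$ over all $S_1$-configurations with $\mathcal{H}_{Q_2}\neq 0$. The maximum is $1/4$: the densest $S_1$ (alternate sites on each row) already gives $1/4$ with $f_2=0$, and one checks that producing short $S_2$-chains does no better once the two blocked end-sites are counted. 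The minimum is $1/8$: each $S_1$-particle blocks two sites on each adjacent $S_2$-row, i.e.\ four $S_2$-sites in all, so the cheapest way to kill $f_2$ entirely is $MN/8$ particles, and intermediate configurations interpolate between $1/6$ and $1/8$.

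Two further remarks. First, your main declared obstacle --- verifying the tic-tac-toe hypothesis that all classes share one $f_2$ --- is a red herring here. For the lemma one needs only the unconditional containment $\mathcal{H}_Q\subseteq\mathcal{H}_{Q_1}(\mathcal{H}_{Q_2})\subseteq\mathcal{H}_{Q_2}$ (graded by total fermion number), which holds for any split $Q=Q_1+Q_2$; the equal-$f_2$ condition is only needed when one wants $\mathcal{H}_Q=\mathcal{H}_{12}$, not when one is merely bounding its support. Second, your first move (taking $S_2$ to be a color class) forces $f_2=0$ and pushes all the content into a constrained hexagonal-lattice problem; this throws away exactly the chain-cohomology information that makes the row decomposition sharp, and it introduces a compatibility condition on $M,N$ that the row cut avoids. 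The row decomposition is both simpler and strictly stronger for this purpose.
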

\begin{proof}
We divide the $M\times N$ triangular lattice into two sublattices,
each consisting of a set of diconnected chains. More precisely, sublattice
$S_{1}$ consists of all sites with coordinates $(m,2n)$ and sublattice
$S_{2}$ consists of all sites with coordinates $(m,2n+1)$, with
$m\in\{0,\dots,M-1\}$ and $n\in\{0,\dots,\lfloor N/2\rfloor-1\}$.
We now show that the cohomology of $Q_{2}$ is trivial for all fillings
outside the range between $1/8$ and $1/4$. 

Let us first consider
the case that none of the $S_{1}$ sites are occupied. It follows that the
$S_{2}$ sublattice is a collection of periodic chains. For these
chains the cohomology is non-trivial at $1/3$ filling only. This
results in an overall filling of $1$ particle per $6$ sites. It
follows that non-trivial elements of $\mathcal{H}_{Q}$ at other fillings
must have particles on the $S_{1}$ sublattice. 

The highest possible
overall filling is found to be $1/4$. On the one hand, maximizing the density on $S_1$, which is $1/2$ filled,
results in an overall filling of $1/4$, since all sites on the
$S_{2}$ sublattice must be empty due to nearest neighbor exclusion. On the other hand, an effective chain of length 2 on the $S_2$ sublattice has a ground state with 1 particle, that is, at 1/2 filling. However, an effective open chain on the $S_2$ sublattice always has a blocked site at each end, leading to an effective filling of 1/4. One readily checks that maximizing the overall filling by occupying the $S_1$ sublattice such that the $S_2$ sublattice is effectively a collection of short chains also leads to a maximal filling of $1/4$.  

The minimal filling for which $\mathcal{H}_{Q_{2}}$ is non-trivial
is obtained by blocking all $S_{2}$ sites by occupying a minimal
number of sites in $S_{1}$. One quickly shows that this leads to
an overall filling of $1/8$. 

We conclude that $\mathcal{H}_{Q_{2}}$
is non-trivial for all fillings between $1/8$ and $1/4$. Using the
fact $\mathcal{H}_{Q}\subseteq\mathcal{H}_{Q_{1}}(\mathcal{H}_{Q_{2}})\subseteq\mathcal{H}_{Q_{2}}$,
we establish this as the maximal range in which zero energy states
can occur.
\end{proof}
Numerically we observe zero energy states for fillings $\nu\in[1/7,1/5]\cap\mathbb{Q}$
(see figure \ref{fig:FillingAll}). We note that the cross-cycle at
$1/7$ filling obtained by Jonsson requires systems of size $M\times N$,
such that both $M$ and $N$ are multiples of $7$. We therefore only
observe $1/7$ filling for the system of size $7\times7$. Interestingly,
zero energy states at $1/5$ filling are observed even for systems
with periodicities that are not compatible with the cross-cycle at
$1/5$ filling. An example is the system of size $4\times10$. Finally,
we note that although the data seems to suggest that there are no
zero energy states outside the range of fillings obtained by Jonsson,
we have insufficient data to go beyond speculation.

\begin{figure}
\includegraphics[width=0.7\textwidth]{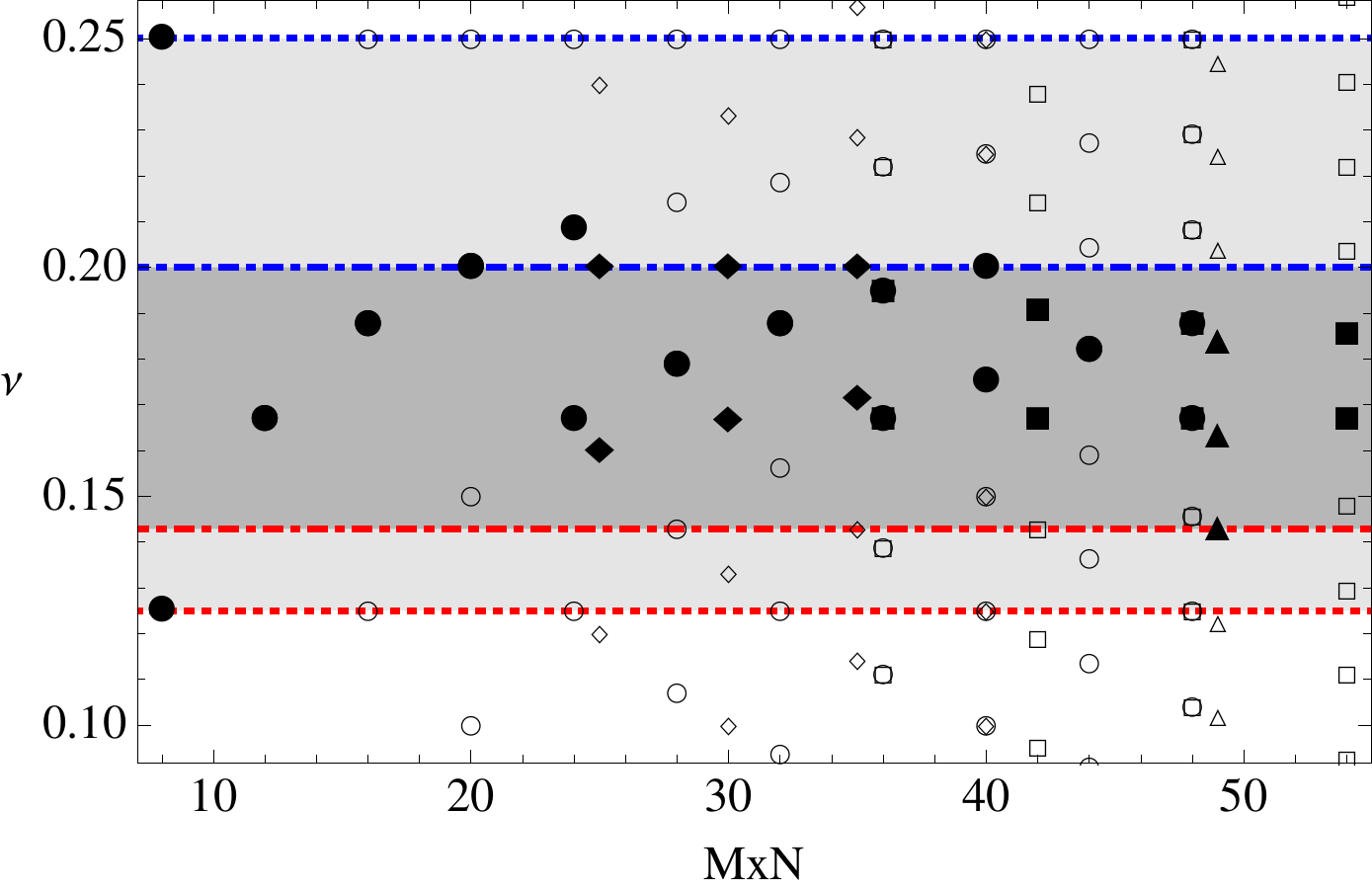}

\caption{We plot the fillings for which zero energy states exist as a function
of the total number of sites for ladder geometries of size $4\times L$
(circles), $5\times L$ (diamonds), $6\times L$ (squares) and $7\times L$
(triangles). Open symbols indicate the possible fillings for a given
system size. Filled symbols indicate the fillings for which zero energy
states exist. The red and blue lines are lower and upper bounds on
the fillings, respectively. The dotted lines indicate the range $1/8<\nu<1/4$,
outside this range no zero energy states exist. The dash-dotted lines
indicate the range $1/7<\nu<1/5$, inside this range zero energy states
exist for the full triangular lattice.\label{fig:FillingAll}}
\end{figure}

Finally, we implemented translational invariance to obtain the ground
state degeneracy per momentum sector. We resolved for momenta along
both periodicities, $\vec{u}=(0,M)$ and $\vec{v}=(N,0)$. The eigenvalues
of translations along $\vec{u}$ and $\vec{v}$ we write as $t_{y}=e^{2\pi\imath k_{y}/M}$
and $t_{x}=e^{2\pi\imath k_{x}/N}$, respectively. The data clearly
reveals a two dimensional flatband dispersion, that is, we typically
find zero energy states for all momenta $(k_{x},k_{y})$. An example
is shown in table \ref{tab:6x9mom}. We observe this property for
all system sizes, with the exception of the 3 leg ladder of size $3\times L$
with $L$ odd. We discuss this special case in more detail in section
\ref{sub:3-leg-ladder}. We conclude that the triangular lattice of
general size $M\times N$ will typically exhibit a two dimensional
flatband dispersion at least for a certain range of fillings. This
feature was also observed for the square lattice \cite{Huijse08b}.

\begin{table}[h!]
\caption{We show the ground state degeneracy per momentum sector for the triangular
lattice of size $6\times9$. There are zero energy states in the sector
with $9$ (left) and $10$ (right) particles. The momentum sectors
are labeled by $k_{x}$ and $k_{y}$ as defined in the text. It is
clear that there are zero energy states in all momentum sectors for
both particle number sectors, indicating a two dimensional flatband
dispersion. \label{tab:6x9mom}}

\centering \begin{footnotesize}

\begin{tabular}{cc|cccccccccc|ccccccc}
\hline 
\multicolumn{19}{c}{$6\times9$}\tabularnewline
\multicolumn{19}{c}{$N_{{\rm {gs}}}=1926$, $W=1462$}\tabularnewline
\hline 
\hline 
\multicolumn{9}{c}{$f=9$} &  & \multicolumn{9}{c}{$f=10$}\tabularnewline
\cline{2-8} \cline{12-18} 
 & $(k_{x},k_{y})$  & 0 & 1 & 2 & 3 & 4 & 5 &  &  &  & $(k_{x},k_{y})$  & 0 & 1 & 2 & 3 & 4 & 5 & \tabularnewline
\cline{2-8} \cline{12-18} 
 & 0  & 9  & 3  & 3  & 9  & 3  & 3  &  &  &  & 0  & 33  & 31  & 31  & 33  & 31  & 31  & \tabularnewline
 & 1  & 4  & 5  & 3  & 4  & 5  & 3  &  &  &  & 1  & 32  & 31  & 31  & 32  & 31  & 31  & \tabularnewline
 & 2  & 4  & 3  & 5  & 4  & 3  & 5  &  &  &  & 2  & 32  & 31  & 31  & 32  & 31  & 31  & \tabularnewline
 & 3  & 7  & 4  & 4  & 7  & 3  & 4  &  &  &  & 3  & 32  & 32  & 31  & 32  & 30  & 31  & \tabularnewline
 & 4  & 4  & 5  & 3  & 4  & 5  & 3  &  &  &  & 4  & 32  & 31  & 31  & 32  & 31  & 31  & \tabularnewline
 & 5  & 4  & 3  & 5  & 4  & 3  & 5  &  &  &  & 5  & 32  & 31  & 31  & 32  & 31  & 31  & \tabularnewline
 & 6  & 7  & 4  & 3  & 7  & 4  & 4  &  &  &  & 6  & 32  & 31  & 30  & 32  & 31  & 32  & \tabularnewline
 & 7  & 4  & 5  & 3  & 4  & 5  & 3  &  &  &  & 7  & 32  & 31  & 31  & 32  & 31  & 31  & \tabularnewline
 & 8  & 4  & 3  & 5  & 4  & 3  & 5  &  &  &  & 8  & 32  & 31  & 31  & 32  & 31  & 31  & \tabularnewline
\end{tabular}\end{footnotesize} 
\end{table}

\section{Superfrustration for 3 and 4 leg ladders\label{sec:3-4}}

In this section we focus on the ladder geometries of size $3\times L$
and $4\times L$. For the 4 leg ladder we present a slightly sharper
upper bound to the total number of ground states and we conjecture
a sharper bound on the range of filling for which zero energy states
exist. We then turn to the 3 leg ladder, for which we present a variety
of analytic and numerical results. Interestingly, the ground state
structure of this system is very different for odd and even lengths,
$L$. For $L$ even we find the total number of ground states for
each particle number sector analytically by solving the cohomology
problem, for odd $L$ this result is still lacking. We do, however,
present a rigorous lower bound on the cohomology and conjecture that
it is exact \textcolor{black}{\cite{Schoutens12}}. Furthermore, numerical
computations for the ground state degeneracy in each momentum sector
clearly show a flatband for $L$ even for momenta in both the vertical
and the horizontal direction. For $L$ odd we do not observe a flatband.

\subsection{4 leg ladder}

\begin{figure}
\includegraphics[height=5.5cm]{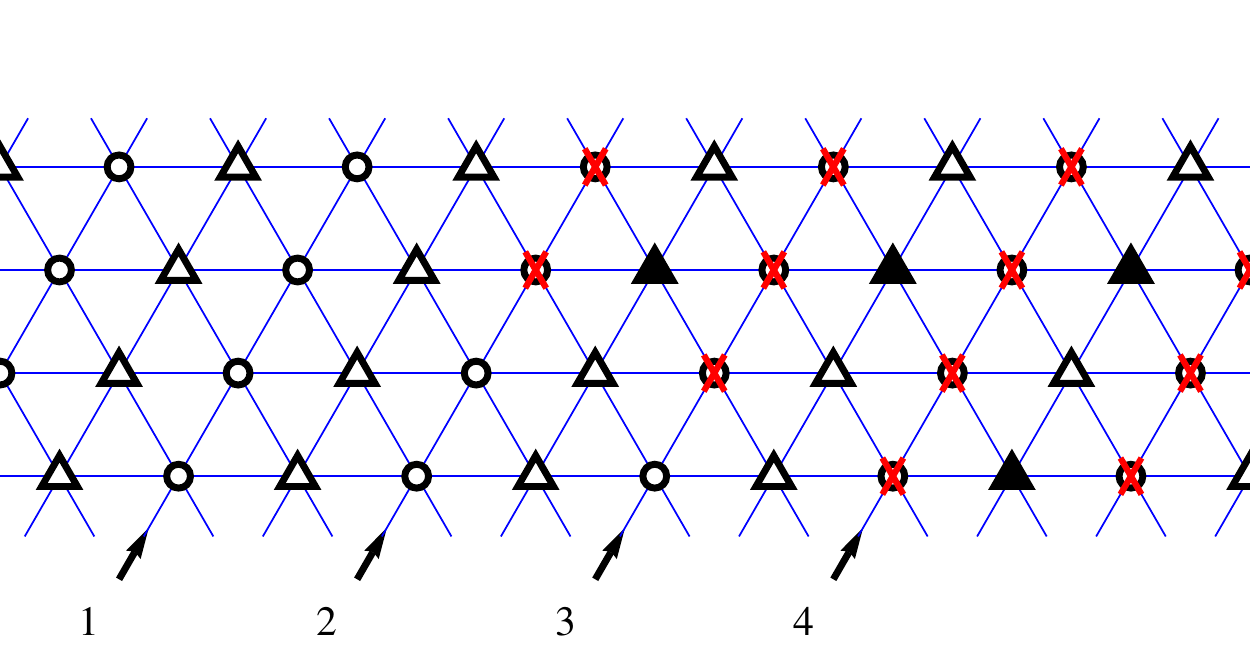}

\caption{We show an example of a configuration on sublattice $S_{1}$ that
leads to the 4 possible effective chains on sublattice $S_{2}$. We
denote sites on sublattices $S_{1}$ and $S_{2}$ by triangles and
circles, respectively. Filled triangles are occupied $S_{1}$ sites,
the open triangles are empty sites. On the $S_{2}$ sublattice we
indicate sites that must be empty by nearest neighbor exclusion by
the red crosses. Finally, the numbers labeling the $S_{2}$ rungs
refer to the 4 possibilities described in the text; 1) 4 site periodic
chain, 2) 2 site open chain, 3) an isolated site, 4) entirely empty.\label{fig:Cohomology}}

\end{figure}

For the 4 leg ladder we have not been able to solve the full cohomology
problem. However, (for $L$ even) we can compute the cohomology of
$\mathcal{H}_{Q_{2}}$, where $Q_{2}$ acts only on the sites of even
rungs and $Q_{1}$ acts only on sites of odd rungs, such that $Q=Q_{1}+Q_{2}$.
This gives an upper bound on the full cohomology solution, since $\mathcal{H}_{Q}\subseteq\mathcal{H}_{Q_{1}}(\mathcal{H}_{Q_{2}})\subseteq\mathcal{H}_{Q_{2}}$.

Let us define $S_{1}$ and $S_{2}$ as the sublattices consisting
of the sites of odd and even rungs, respectively. To compute $\mathcal{H}_{Q_{2}}$
we consider all configurations on sublattice $S_{1}$ such that the
cohomology of $Q_{2}$ is non-trivial. Note that the sublattice $S_{2}$
is a collection of periodic 4 site chains. Particles on $S_{1}$ block
sites on these chains effectively cutting them into smaller open chains.
For a given even rung in $S_{2}$ we thus have four possbilities:
1) the adjacent $S_{1}$ rungs are empty, leading to a 4 site periodic
chain, or the configuration on the adjacent $S_{1}$ rungs is such
that the $S_{2}$ rung is effectively 2) a 2 site open chain, 3) an
isolated site, 4) entirely empty (see figure \ref{fig:Cohomology}).
In all situations there is one non-trivial element in $\mathcal{H}_{Q_{2}}$,
except in the third situation; when the $S_{2}$ rung is effectively
an isolated site the cohomology vanishes. It thus quickly follows
that a complete basis of representatives of $\mathcal{H}_{Q_{2}}$
is given by all configurations on $S_{1}$ such that there is no $S_{2}$
rung that is effectively an isolated site.

\begin{figure}
\includegraphics[width=0.7\textwidth]{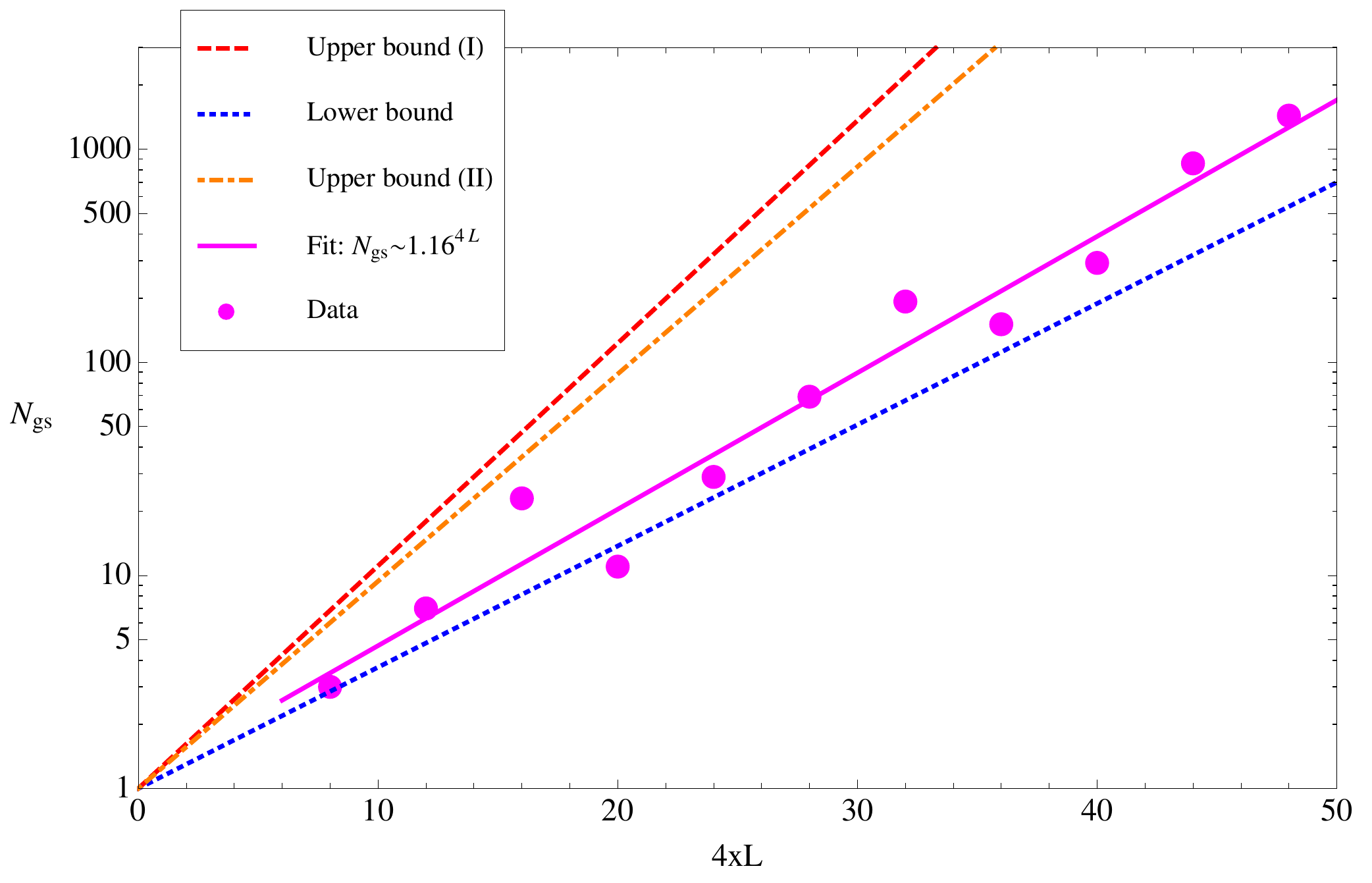}\caption{We plot the total number of zero energy states for the $4\times L$
ladders on a logarithmic scale. The drawn, magenta line is a fit to
the data. The dotted, blue line is the lower bound and the dashed,
red line is the upper bound obtained by Engstr\"om and the dash-dotted,
orange line is the upper bound from the dimension of $\mathcal{H}_{Q_{2}}$.
\label{fig:gs4xL}}
\end{figure}

A transfer matrix generating the allowed configurations is easily
constructed. We define the activity on sublattice $S_{i}$ as $e^{\mu_{i}}$
and write $y=e^{\mu_{1}}$ and $x=e^{\mu_{2}}$. The transfer matrix
for adding two rungs can then be written as

\begin{equation}
T_{4}^{2}(x,y)=\left(\begin{array}{ccccccc}
x & xy & xy & xy & xy & y^{2} & y^{2}\\
x & 0 & xy & 0 & y & y^{2} & y^{2}\\
x & y & 0 & xy & 0 & y^{2} & y^{2}\\
x & 0 & y & 0 & xy & y^{2} & y^{2}\\
x & xy & 0 & y & 0 & y^{2} & y^{2}\\
1 & y & y & y & y & y^{2} & y^{2}\\
1 & y & y & y & y & y^{2} & y^{2}
\end{array}\right).
\end{equation}
 We readily verify that ${\rm Tr}([T_{4}^{2}(-1,-1))]^{n})$ correctly
reproduces the Witten index results for ladders of length $2n$. The
dimension of $\mathcal{H}_{Q_{2}}$ for a ladder of length $2n$ is
given by 
\begin{equation}
{\rm {\rm dim}}(\mathcal{H}_{Q_{2}}(L=2n))={\rm Tr}([T_{4}^{2}(1,1))]^{n})=\sum_{i}\lambda_{i}^{n}=6^{n}+(-2)^{n}+(-1)^{n},
\end{equation}
 where $\lambda_{i}$ are the eigenvalues of the transfer matrix,
which are found to be $6,-2,-1,0,0,0$. It follows that for large
$n$, we obtain a new upper bound to the total number of zero energy
states. For ladders of even length $L$ we find it to be $(6^{1/8})^{4L}\approx1.25^{4L}$,
which is slightly sharper than the upper bound by Engstr\"om ($\sqrt{\phi}^{4L}\approx1.27^{4L}$).
The result can be extended to ladders of odd length by constructing
a transfer matrix for adding two $S_{2}$ rungs and one $S_{1}$ rung.
This is straightforward, but somewhat tedious. We merely state the
result

\begin{equation}
T_{4}^{3}(x,y)=\left(\begin{array}{ccccccc}
5x^{2} & x^{2}y & x^{2}y & x^{2}y & x^{2}y & xy^{2} & xy^{2}\\
x^{2} & 0 & 2xy & 2xy & 0 & xy^{2} & xy^{2}\\
x^{2} & 0 & 0 & 2xy & 2xy & xy^{2} & xy^{2}\\
x^{2} & 2xy & 0 & 0 & 2xy & xy^{2} & xy^{2}\\
x^{2} & 2xy & 2xy & 0 & 0 & xy^{2} & xy^{2}\\
x & xy & xy & xy & xy & y^{2} & y^{2}\\
x & xy & xy & xy & xy & y^{2} & y^{2}
\end{array}\right).
\end{equation}
 From this we find 
\begin{equation}
{\rm {\rm dim}}(\mathcal{H}_{Q_{2}}(L=2n+1))={\rm Tr}(T_{4}^{3}(1,1)[T_{4}^{2}(1,1))]^{n-1})=\frac{160}{21}6^{n-1}+\frac{5}{7}(-1)^{n-1},
\end{equation}
 which leads to the same asymptotic bound on the total number of zero
energy states as for the even length ladders.

\begin{figure}
\includegraphics[width=0.7\textwidth]{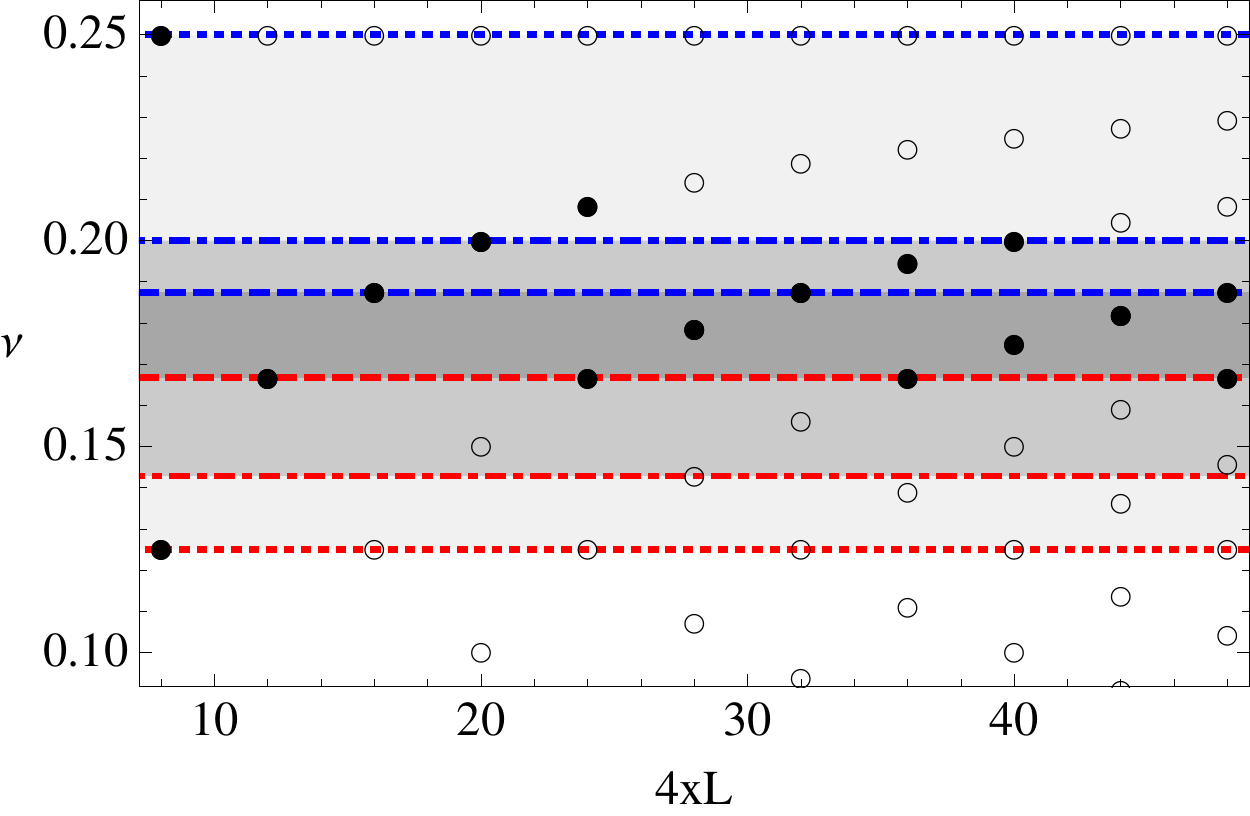}\caption{We plot the fillings for which zero energy states exist as a function
of the total number of sites for the four leg ladder. Open circles
indicate the possible fillings for a given system size. Filled circles
indicate the fillings for which zero energy states exist. The red
and blue lines are lower and upper bounds on the fillings, respectively.
The dotted lines indicate the range $1/8<\nu<1/4$, outside this range
no zero energy states exist. The dash-dotted lines indicate the range
$1/7<\nu<1/5$, inside this range zero energy states exist for the
full triangular lattice. The dashed lines indicate the range $1/6<\nu<3/16$,
inside this range zero energy states exist for the 4 leg ladder. \label{fig:4xLfilling}}
\end{figure}

In figure \ref{fig:gs4xL} we plot the numerically computed total
number of zero energy states together with the new upper bound and
the two known bounds. We see that the numerical results are consistent
with these bounds. A fit to the numerical data suggests that the total
number of ground states goes as $\sim1.16^{4L}$. It is clear that
the new upper bound still leaves much room for improvement.

The cohomology solution of $Q_{2}$ contains more information: it
gives an upper bound on the number of zero energy states at each particle
number. One easily checks that the non-trivial elements in $\mathcal{H}_{Q_{2}}$
have at least 1 and at most 2 particles per 2 rungs. This proves that
there are no zero energy states for all fillings $\nu>1/4$ and $\nu<1/8$.
In other words, we find a strict bound on the range of filling for
which zero energy states can exist: $1/8\leq\nu_{{\rm gs}}\leq1/4$.
However, the fact that ${\rm dim}(\mathcal{H}_{Q_{2}})\gg N_{{\rm gs}}$
suggests that the actual range will be significantly smaller. In figure
\ref{fig:4xLfilling} we show the numerical results for the range
of fillings for which zero energy states exist. The results clearly
obey the strict bound. Apart from the strict bound we have also indicated
the minimal range, $1/7\leq\nu\leq1/5$, for which zero energy states
exist as shown by Jonsson. This result, however, is valid provided
that the triangular lattice is wrapped on a large enough torus. For
the 4 leg ladder it is easily seen that the periodicity $\vec{v}=(0,4)$
does not accomodate the cross-cycles with $1/5$ filling or $1/7$
filling. We do find cross-cycles with $1/6$ filling and with $3/16$
filling (see figure \ref{fig:Cross-cycle}). By combining their unit
cells, we can construct cross-cycles for all fillings in the range
$1/6\leq\nu\leq3/16$ provided the length of the system is long enough.
This thus constitutes a strict minimum on the range in which zero
energy states exist. The numerical data is in good agreement with
this bound and may even suggest that for large enough $L$ there are
no zero energy states outside this range. Based on the data we are
confident to conjecture that there are no zero energy states for fillings
$\nu<1/6$. For the maximal filling, however, the data is not conclusive
and it might still be higher than $3/16$.

\begin{figure}
\includegraphics[height=2.5cm]{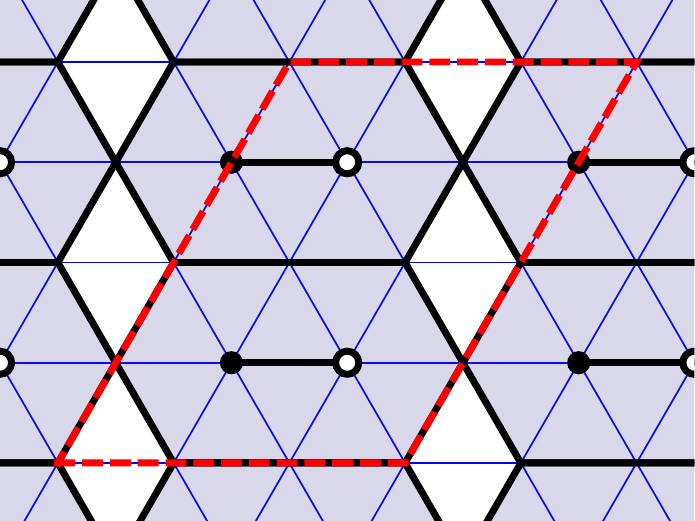}

\includegraphics[height=2.5cm]{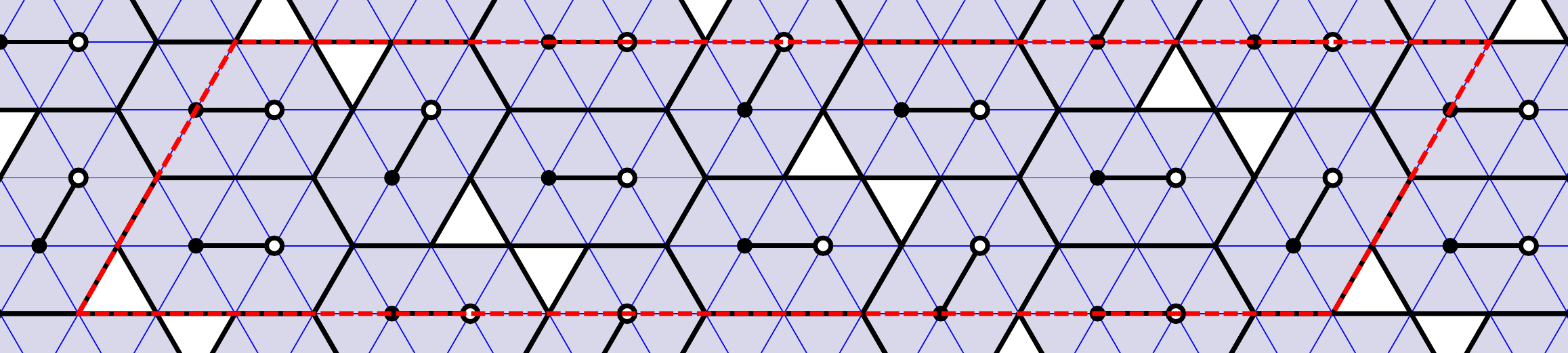}

\caption{We show examples of cross-cycles with $1/6$ (top) and $3/16$ (bottom)
filling (see section \ref{sub:SummaryTri}). We use the notation introduced
in \cite{Jonsson05p}; the black sites form the set $\{a_{i}\}$ and
the white sites form the set $\{b_{i}\}$. Both sets form a maximal
independent set. The dashed red line serves as a guide to the eye,
to see that the periodicities of the cross-cycles are compatible with
the 4 leg ladder. The figure also shows that by concatenating the
two patterns we obtain cross-cycles for all rational fillings between
$1/6$ and $3/16$ filling provided the length of the ladder is sufficiently
long.\label{fig:Cross-cycle}}
\end{figure}

\subsection{3 leg ladder\label{sub:3-leg-ladder}}

We now consider the 3 leg ladder. We start with the results that we
obtain analytically: the Witten index, the solution of the cohomology
of $Q$ for $L$ even and various bounds on the number of zero energy
states and their filling for $L$ odd.

A closed expression for the Witten index of the 3 leg ladder of general
length $L$ is readily obtained from the transfer matrix that adds
a single rung. The transfer matrix reads

\begin{equation}
T_{3}(z)=\left(\begin{array}{cccc}
1 & z & z & z\\
1 & 0 & 0 & z\\
1 & z & 0 & 0\\
1 & 0 & z & 0
\end{array}\right),
\end{equation}
 where we defined the activity $z=e^{\mu}$. The Witten index follows
from setting the activity to $-1$:

\begin{eqnarray}
W(L) & = & {\rm Tr}(T_{3}^{L}(-1))\\
 & = & \sum_{i}\lambda_{i}^{L}\\
 & = & 2^{L/2}\left(e^{-i\pi L/2}+e^{i\pi L/2}\right)+e^{-i\pi L/3}+e^{i\pi L/3}\\
 & = & 2^{L/2+1}\cos(\pi L/2)+2\cos(\pi L/3)\\
 & = & \begin{cases}
(-1)^{L/2}2^{L/2+1}+\delta_{L} & \textrm{for \ensuremath{L} even,}\\
-\delta_{L} & \textrm{for \ensuremath{L} odd},
\end{cases}
\end{eqnarray}
 where $\delta_{L}=2$ if $L$ is divisible by 3 and $\delta_{L}=-1$
otherwise (interestingly, this is the same $\delta_{L}$ that Jonsson
introduced to write the Witten index for the square lattice \cite{Jonsson06}).
Note the starck contrast between ladders of even and odd length. For
the former the absolute value of the Witten index grows exponentially
with system size, while for the latter the lower bound on the number
of ground states it at most $2$. 

We now consider the cohomology problem for $L$ even. We define $S_{2}$
as the subset of all sites on the even rungs and $S_{1}$ as the rest.
The cohomology of $\mathcal{H}_{Q_{2}}$, where $Q_{2}$ acts only
on the $S_{2}$ sites is found from considering all possible configuration
on $S_{1}$. For $S_{1}$ empty, $S_{2}$ consist of $L/2$ periodic
chains of length 3. Each chain has 2 ground states with 1 particle.
So we find that the are $2^{L/2}$ non-trivial elements in $\mathcal{H}_{Q_{2}}$
with $f_{1}=0$ and $f_{2}=L/2$. Now if we occupy a site on $S_{1}$
it blocks two sites of the $S_{2}$ chains on the two neighboring
rungs. These rungs are thus effectively isolated sites. The cohomology
is then trivial unless these isolated sites are also blocked. Note that
we can have at most one particle per rung. It follows that the cohomology
is trivial for all $f_{1}>0$, except if $f_{1}=L/2$. Upon inspection
one can show that the number of ways to block all $S_{2}$ sites is:

\begin{equation}
2^{L/2}+\delta_{L}(-1)^{L/2}\label{eq:blocks2}
\end{equation}
Consequently, (\ref{eq:blocks2}) gives the number of non-trivial
elements in $\mathcal{H}_{Q_{2}}$ with $f_{1}=L/2$ and $f_{2}=0$.
This solves $\mathcal{H}_{Q_{2}}$. It is easy to show that $\mathcal{H}_{Q_{1}}(\mathcal{H}_{Q_{2}})$
has the same non-trivial elements by checking that within $\mathcal{H}_{Q_{2}}$
all elements are in the kernel of $Q_{1}$ and therefore not in the
image of $Q_{1}$. Finally, since all elements have $f=L/2$, we immediately
find $\mathcal{H}_{Q}=\mathcal{H}_{Q_{1}}(\mathcal{H}_{Q_{2}})$.
So the number of zero energy states is

\begin{equation}
N_{{\rm gs}}(L)=2^{L/2+1}+\delta_{L}(-1)^{L/2}\quad\textrm{for \ensuremath{L} even}.
\end{equation}
All ground states have $L/2=MN/6$ particles, where $MN=3L$ is the
total number of sites. This corresponds to $1/6$ filling. We find
perfect agreement with these results and the numerical results for
ladders of even length up to $L=16$.

For ladders of odd length $L$ we also study the cohomology problem.
We make the same subdivision in lattices $S_{1}$ and $S_{2}$, but
now $S_{1}$ consists of $(L-3)/2$ disconnected rungs and one pair
of adjacent rungs. Following the same reasoning as for $L$ even,
we now find that $\mathcal{H}_{Q_{2}}$ consists of $2^{(L-1)/2}$
non-trivial elements with $f_{1}=0$ and $f_{2}=(L-1)/2$ and $2^{(L-1)/2}+\delta_{L}(-1)^{(L-1)/2}$
non-trivial elements with $f_{1}=(L+1)/2$ and $f_{2}=0$. Again one
readily checks that $\mathcal{H}_{Q_{1}}(\mathcal{H}_{Q_{2}})$ has
the same non-trivial elements. However, to obtain $\mathcal{H}_{Q}$
from $\mathcal{H}_{Q_{1}}(\mathcal{H}_{Q_{2}})$ is now a non-trivial
step and turns out to be very challenging. We will comment on this
below. We first mention a few results that we obtain directly from
$\mathcal{H}_{Q_{1}}(\mathcal{H}_{Q_{2}})$. Since $\mathcal{H}_{Q_{1}}(\mathcal{H}_{Q_{2}})$
contains the cohomology of $Q$, it gives an upper bound on the total
number of zero energy states: 
\begin{equation}
N_{{\rm gs}}(L)\leq{\rm dim}(\mathcal{H}_{Q_{1}}(\mathcal{H}_{Q_{2}}))=2^{(L+1)/2}+\delta_{L}(-1)^{(L-1)/2}\quad\textrm{for \ensuremath{L} odd.}
\end{equation}
 Furthermore, we find that the cohomology of $Q$ is non-trivial only
in the sectors with $(L\pm1)/2$ particles. For large $L$ this converges
to $1/6$ filling.

In figure \ref{fig:3xLgs} we plot the numerical results for total
number of zero energy states for $L$ odd. We also plot the upper
bound given by the dimension of $\mathcal{H}_{Q_{1}}(\mathcal{H}_{Q_{2}})$
and the lower bound given by the Witten index. There is a clear substructure
with period 3, both in the numerical data as well as in the analytic
results. We thus extract the asymptotic behaviour of the number of
ground states by fitting to the running average of the data over three
consecutive ladder lengths. We find $N_{gs}\sim1.08^{NM}$. Note that,
in contrast, for the 3 leg ladder with an even number of rungs we
found $N_{{\rm gs}}\sim1.12^{NM}$. This result clearly indicates
that the last step in the cohomology computation, i.e. finding $\mathcal{H}_{Q}$
from $\mathcal{H}_{Q_{1}}(\mathcal{H}_{Q_{2}})$ is a non-trivial
step. Finally, we mention that all zero energy states are found in
the sectors with $(L\pm1)/2$ particles in agreement with the result
for $\mathcal{H}_{Q_{1}}(\mathcal{H}_{Q_{2}})$. We also observe that
the number of zero energy states with $(L+1)/2$ particles equals
the number of zero energy states with $(L-1)/2$ particles up to $\pm\delta_{L}$
in agreement with the Witten index.

\begin{figure}
\includegraphics[width=0.7\textwidth]{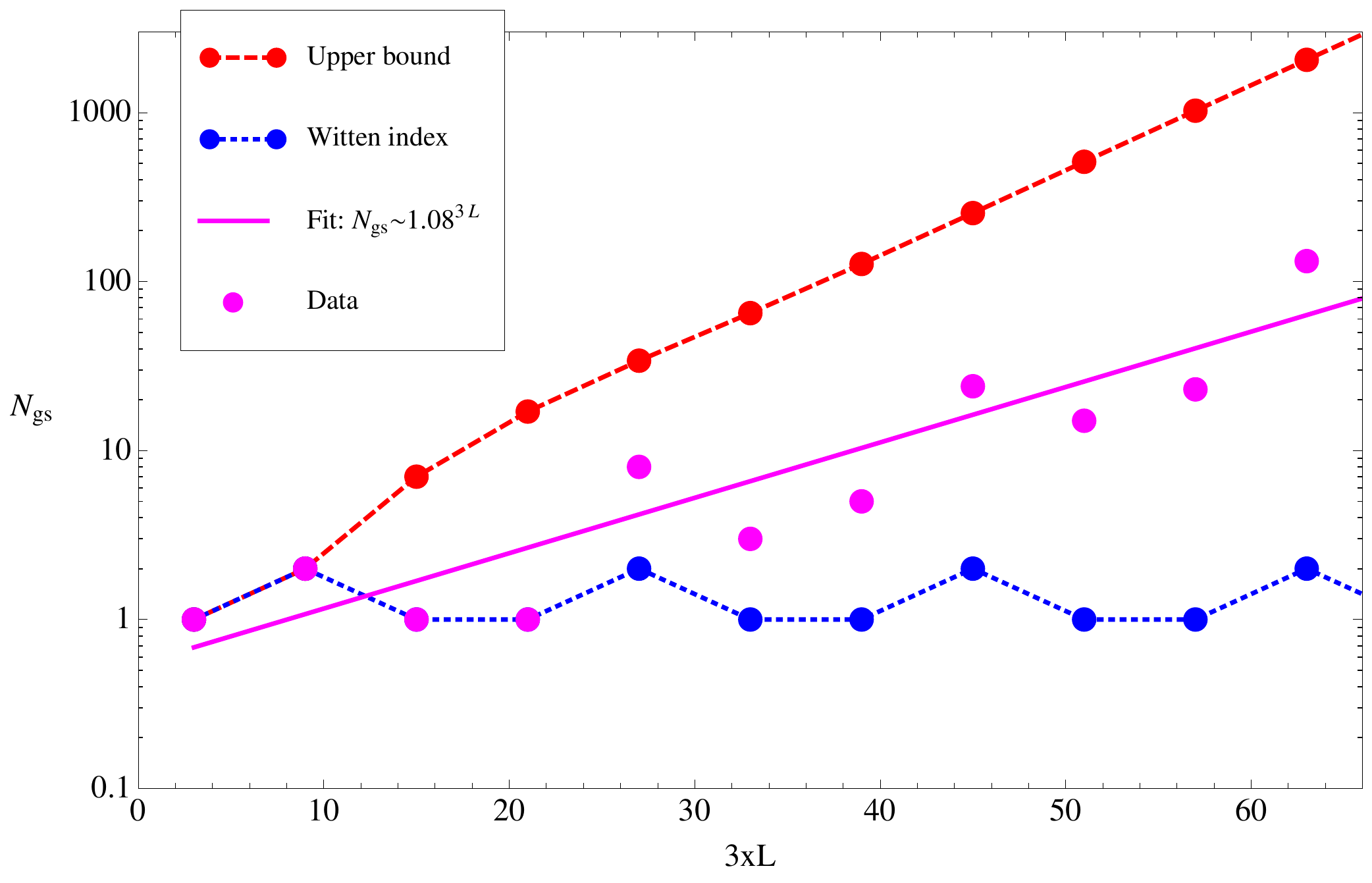}

\caption{We plot the total number of zero energy states for the $3\times L$
ladders with odd $L$ on a logarithmic scale. The drawn, magenta line
is a fit to the running average of the data over three consecutive
ladder lengths. The dotted, blue line is a lower bound given by the
absolute value of the Witten index and the dashed, red line is an
upper bound on the number of zero energy states given by the dimension
of $\mathcal{H}_{Q_{1}}(\mathcal{H}_{Q_{2}})$. \label{fig:3xLgs}}
\end{figure}

The numerical results for the number of zero energy states, not only
resolve for particle number, but also for momentum along the legs
and the rungs of the ladder. For the 3 leg ladder this data reveals
again a very different structure for ladders with even length and
ladders with odd length. For the ladders with $L$ even the number
of zero energy states is distributed evenly over all momentum sectors.
This implies that there is a completely flat dispersion relation in
all directions. This flatband property was also observed for the supersymmetric
model on the square lattice \cite{Huijse08b}. For the square lattice
this could be understood from a mapping of the cohomology elements
onto tilings \cite{Huijse08b,Huijse10,Jonsson06}. The unit cell of
a tiling directly translates into the eigenvalues under translations
of the corresponding ground state. For the 3 leg ladder we also find
a correspondence between representatives of the cohomology and tilings.
An example is shown in figure \ref{fig:TilesLodd}. There are two
tiles and each tile can be neighbored by both tiles (this
corresponds to the fact that the ground state degeneracy goes as $2^{L/2}$).
It is easy to see that there are many possible tilings with a unit
cell equal to the size of the system. Each such tiling thus corresponds
to $3L$ ground states, one in each momentum sector. This explains
the observed flatband. Tilings with a smaller unit cell correspond
to ground states that occur only in certain momentum sectors. We have
checked that for finite systems a careful analysis of all tilings
and their properties under translations indeed reproduces the distribution
of ground states over momentum sectors as found in the numerics. We
note that just as for the square lattice the correspondence between
cohomology elements and tilings is not exactly 1-to-1. In fact we
find

\begin{equation}\label{eqn:cohomtiling}
N_{{\rm gs}}(L)=N_{{\rm tilings}}(L)+\delta_{L}(-1)^{L/2},
\end{equation}
 which is remarkably similar to the result obtained for the square
lattice.

For the 3 leg ladder with $L$ odd we observe the following. When
$L$ is a multiple of 3 there are zero energy states in the sectors
with $p_{x}=2\pi-p_{y}=2\pi k/3$, with $k=0,1,2$. When $L$ is not
a multiple of 3 all zero energy states have $p_{x}=p_{y}=0$. The
substructure with period $3$ in the systemsize is again apparent. 

\begin{figure}
\includegraphics[height=3.93cm]{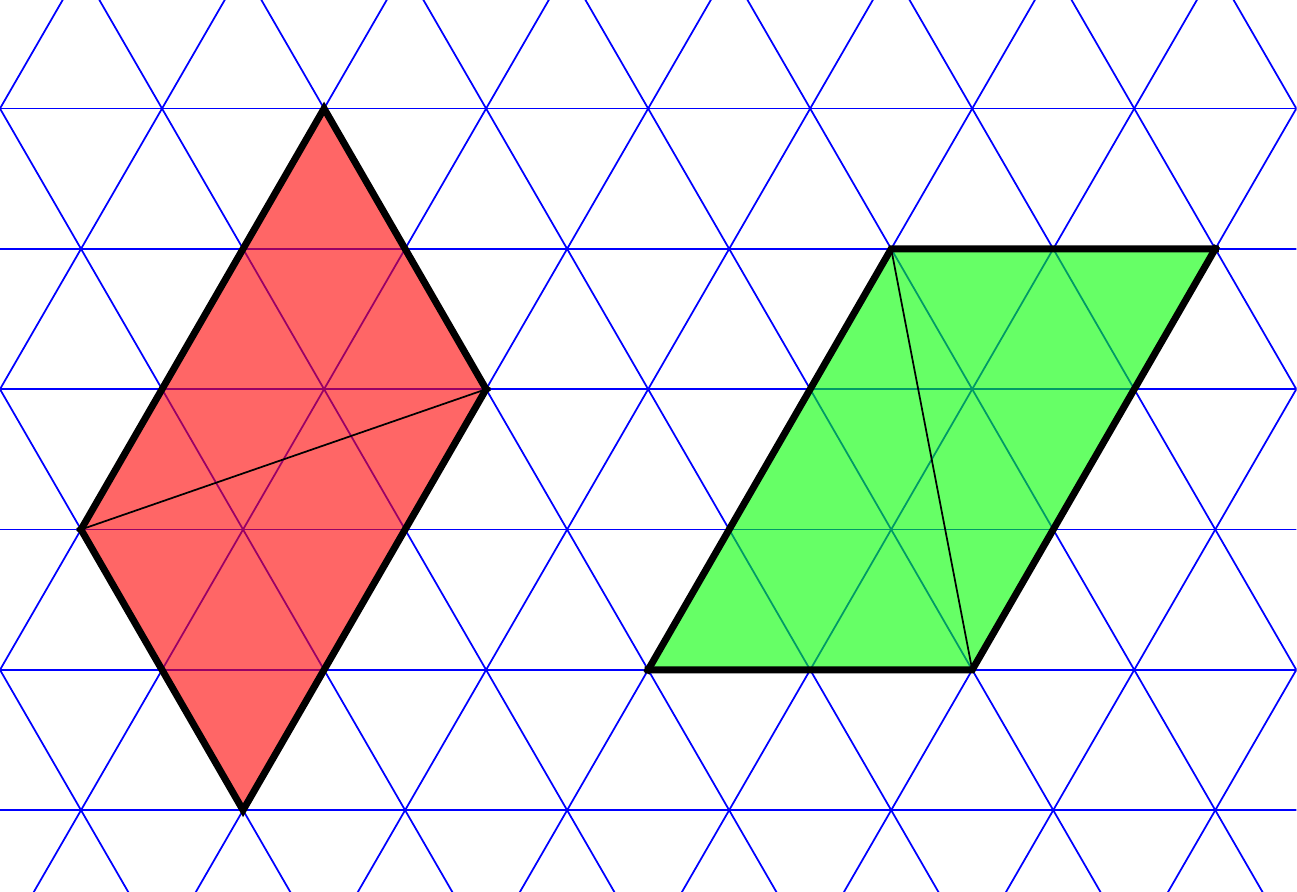}

\includegraphics[height=2.5cm]{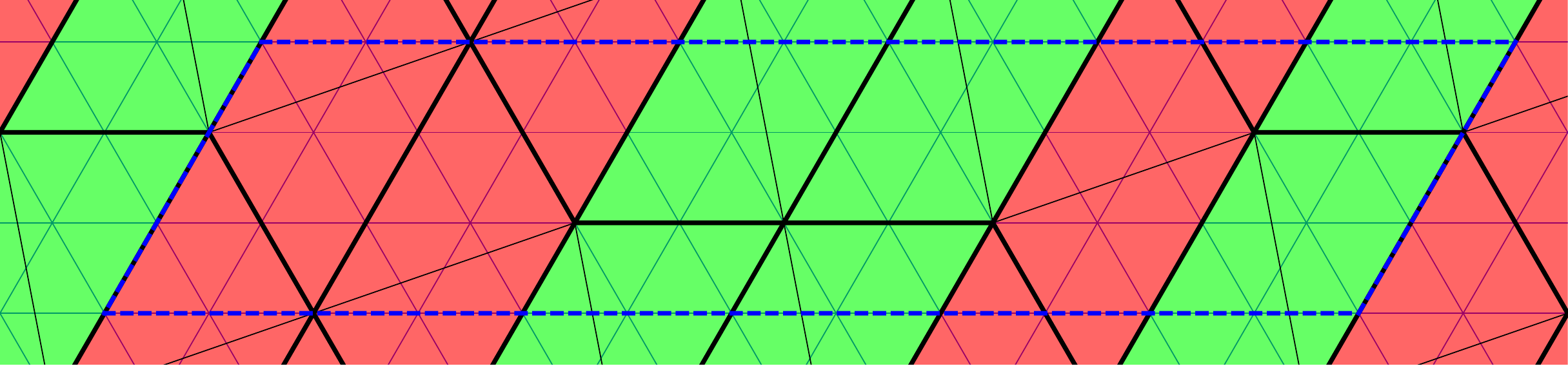}

\caption{Tilings on the 3 leg ladder. Above we show the 2 allowed tiles in
red and green. Note that it is the same tile with 2 possible orientations,
the third orientation is not compatible with the boundary conditions.
Below we show an example of a tiling with periodicities $\vec{u}=(0,3)$
and $\vec{v}=(12,0)$. \label{fig:Tiles}}
\end{figure}

The results presented in this section clearly indicate that the cohomology
problem for $3\times L$ with $L$ odd is highly non-trivial. Nevertheless,
we have been able to make substantial progress beyond the results
mention above. We believe a complete solution is possible and hope
to present it in forthcoming work \cite{Schoutens12}. At this point
we present a rigorous lower bound on the number of ground states.
\begin{lem*}
The number of zero energy states, $N_{{\rm gs}}$, for the $3\times L$
triangular lattice with $L$ odd obeys: 

\begin{equation}
N_{{\rm gs}}\geq{\rm const.}\times[(\phi)^{1/6}]^{3L}\sim{\rm const.}\times1.08^{3L},
\end{equation}
with $\phi=\frac{1}{2}(1+\sqrt{5})$, the golden ratio.
\end{lem*}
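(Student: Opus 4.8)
The plan is to bound $N_{{\rm gs}}=\dim\mathcal{H}_{Q}$ from below by exhibiting explicitly a family of $\sim\phi^{L/2}=(\phi^{1/6})^{3L}$ linearly independent classes in $\mathcal{H}_{Q}$. The natural pool to work with is $\mathcal{H}_{Q_{1}}(\mathcal{H}_{Q_{2}})$ computed above: for $L$ odd it has dimension $2^{(L+1)/2}+\delta_{L}(-1)^{(L-1)/2}\sim 2^{L/2}$ and is spanned by the representatives carrying one particle on every rung of one parity (either all on $S_{2}$, with the three--site rungs in their $1/3$--filled superpositions, or all on $S_{1}$, in definite positions chosen so as to block every site of the $S_{2}$ rungs). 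Since $\mathcal{H}_{Q}\subseteq\mathcal{H}_{Q_{1}}(\mathcal{H}_{Q_{2}})$ and the passage to $\mathcal{H}_{Q}$ is exactly the difficult step (the two families of representatives sit at the distinct fermion numbers $(L\mp1)/2$, so the sufficient condition for tic-tac-toe fails), the task is to show that at least a golden--ratio--rate subfamily of these classes does survive in $\mathcal{H}_{Q}$.

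First I would recast suitable ``$S_{1}$--type'' frozen configurations as cross-cycles in the sense of Section~\ref{sub:SummaryTri}. Grouping the $L$ rungs into $\sim L/2$ consecutive pairs, a particle on one rung of a pair is ``companioned'' with an adjacent empty site on the neighbouring rung to form a factor $|a_{i}\rangle-|b_{i}\rangle$ with $a_{i}$ adjacent to $b_{i}$; for the appropriate choices of companions, one of the configurations in the resulting product $z=\prod_{i}(|a_{i}\rangle-|b_{i}\rangle)$ has its occupied set a maximal independent set at $1/6$ filling. Such a $z$ automatically lies in $\ker Q^{\dagger}$ and, by the maximal--independent--set property, is not in $\textrm{Im}\,Q^{\dagger}$, so it represents a nonzero homology class. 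Linear independence of the classes obtained from distinct admissible companion patterns is then the point to verify with care; the standard device is that each such $z$ contains a maximal independent set with coefficient $\pm1$ that is absent from every $z$ built on a different pattern, which I would use (together with the locality of $Q^{\dagger}$) to rule out nontrivial relations among the classes.

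Next comes the counting. In the $L$ even case every admissible companion pattern is compatible with every other, which is why that count is $\sim 2^{L/2}$; I expect the odd--$L$ seam (the pair of adjacent $S_{1}$ rungs) to impose precisely a nearest--neighbour exclusion between pairs, so that the admissible patterns are enumerated by a $2\times2$ transfer matrix with leading eigenvalue $\phi$ and the number of surviving cross-cycles is ${\rm const.}\times\phi^{L/2}$. The overall constant and the period--$3$ substructure visible in Fig.~\ref{fig:3xLgs} would emerge from the dependence of the transfer--matrix boundary conditions on $L\bmod 3$, in the same way $\delta_{L}$ appeared in the Witten index and in the even--$L$ count. Since $\phi^{L/2}=(\phi^{1/6})^{3L}$, this establishes $N_{{\rm gs}}\geq{\rm const.}\times[(\phi)^{1/6}]^{3L}$.

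The main obstacle is the interface between the second and third steps: pinning down exactly which frozen configurations (equivalently, which companion patterns) give classes that are nonzero and independent already in $\mathcal{H}_{Q}$, rather than only in the larger space $\mathcal{H}_{Q_{1}}(\mathcal{H}_{Q_{2}})$, and confirming that these still number $\sim\phi^{L/2}$ and not fewer. This is precisely the content of the hard ``last step'' of the cohomology computation that we could carry out only partially for $L$ odd: a too--permissive choice produces cross-cycles that turn out to be trivial or dependent in $\mathcal{H}_{Q}$, while a too--restrictive one degrades the exponent below $\phi^{1/6}$. By comparison the remaining ingredients — the locality check that the selected $z$'s lie in $\ker Q^{\dagger}\setminus\textrm{Im}\,Q^{\dagger}$, and the diagonalization of the $2\times2$ transfer matrix — are routine.
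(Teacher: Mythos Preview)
Your approach and the paper's diverge substantially, and the gap you yourself flag is exactly the place where the paper's argument supplies a mechanism you are missing.

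The paper does not try to manufacture cross-cycles out of the $S_{1}$--frozen configurations of the original (alternating--rung) decomposition. Instead it \emph{changes the decomposition}: $S_{1}$ becomes a single rung and $S_{2}$ the remaining open $3\times(L-1)$ ladder. With this choice every non-trivial element of $\mathcal{H}_{Q_{2}}$ has the same $f_{2}=(L-1)/2$, so the tic-tac-toe sufficient condition is met and $\mathcal{H}_{Q}=\mathcal{H}_{Q_{1}}(\mathcal{H}_{Q_{2}})$ on the nose. The price is that the action of $Q_{1}$ on $\mathcal{H}_{Q_{2}}$ is now non-trivial; the paper identifies explicit \emph{linear combinations} of basis elements of $\mathcal{H}_{Q_{2}}$ (not individual frozen configurations) that lie in $\ker Q_{1}$ modulo $\mathrm{Im}\,Q_{1}$, and shows that these are in bijection with tilings of $S_{2}$ by \emph{four} tile types (Fig.~\ref{fig:TilesLodd}), counted up to cyclic shift. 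The resulting recursion is fourth order,
\[
t(l)=t(l-2)+t(l-4)+2t(l-3),
\]
whose characteristic polynomial contains the factor $(x^{2}-x-1)$ and hence has $\phi$ as its dominant root.

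So your expectation that the odd-$L$ seam imposes a bare nearest-neighbour exclusion and a $2\times2$ Fibonacci transfer matrix is not what actually occurs: the combinatorics that produces $\phi$ is governed by four tile types of widths $2,3,3,4$ rather than a single adjacency constraint, and the surviving classes are not cross-cycles attached to individual companion patterns but genuine linear combinations dictated by the $Q_{1}$--action. A cross-cycle route might in principle be pushed through, but you would have to \emph{derive} the correct family of products and prove their independence in $\mathcal{H}_{Q}$ rather than posit the exclusion rule; as written, the proposal leaves precisely that step unproven, and the guessed rule does not match the structure the paper finds.
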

This lower bound agrees with the exponential growth observed in the
numerical data and, in fact, we believe that $(\phi)^{1/6}$ is the
exact base for the exponential growth. We limit ourselves to a short
sketch of the proof of this lower bound and defer the details to \cite{Schoutens12}. 

The key ingredient is to compute $\mathcal{H}_{Q_{1}}(\mathcal{H}_{Q_{2}})$
for a different choice of the sublattices $S_{1}$ and $S_{2}$. Instead
of the choice discussed above, we now take $S_{1}$ to be one rung
and $S_{2}$ the rest of the system. $S_{2}$ is thus an open 3 leg
ladder of even length. The cohomology of $Q_{2}$ is still easily
obtained for this choice of sublattice, however, we pay extra care
to choose the basis in such a way that the action of $Q_{1}$ is relatively
simple. As before we find that the number of non-trivial elements
in $\mathcal{H}_{Q_{2}}$ is $2^{(L+1)/2}+\delta_{L}(-1)^{(L-1)/2}$,
however, we now find that $2^{(L-1)/2}$ of these elements have $f_{1}=0$
and $f_{2}=(L-1)/2$ and $2^{(L-1)/2}+\delta_{L}(-1)^{(L-1)/2}$ of
these elements have $f_{1}=1$ and $f_{2}=(L-1)/2$. As a consequence,
the next step in the cohomology computation, that is, finding $\mathcal{H}_{Q_{1}}(\mathcal{H}_{Q_{2}})$,
is now non-trivial. The upshot is that for this construction we immediately
have $\mathcal{H}_{Q}=\mathcal{H}_{Q_{1}}(\mathcal{H}_{Q_{2}})$.
It is possible to express the action of $Q_{1}$ on the non-trivial
elements of $\mathcal{H}_{Q_{2}}$ in a general form, that is for
general length $L$. From these expressions, we can identify certain
linear combinations of the basis elements of $\mathcal{H}_{Q_{2}}$
that are annihilated by $Q_{1}$ up to terms that are in the image
of $Q_{1}$. These elements are thus in $\mathcal{H}_{Q_{1}}(\mathcal{H}_{Q_{2}})$.
For $L\equiv0\mod3$ we find that these elements can be written as
sequences of tiles on $S_{2}$. More precisely, we can identify an
element of $\mathcal{H}_{Q_{1}}(\mathcal{H}_{Q_{2}})$ with each tiling
of $S_{2}$ up to cyclic shifts. The rigorous lower bound then follows
from counting the number of these tilings (note that the reduction
due to cyclic shifts is at the worst a factor of $2/(L-1)$, which
does not affect the exponential behavior). The tilings consist of
4 types of tiles depicted in figure \ref{fig:TilesLodd}. 

The counting of the tilings is easily done. We define $t(l)$ as the
number of tilings at $L=2l+1$. The counting of $t(l)$ leads to a
recursion

\begin{equation}
t(l)=t(l-2)+t(l-4)+2t(l-3).
\end{equation}
The characteristic equation has a factor $(x^{2}-x-1)$ and the largest
eigenvalue is the golden ratio $\phi=(1+\sqrt{5})/2$. The number
of ground states growth as $(\phi^{1/6})^{3L}\sim1.08^{3L}$ in agreement
with the numerical results. The resulting value for the ground state
entropy per site is

\begin{equation}
S_{gs}/(3L)\geq\log(\phi)/6\sim0.080\ .
\end{equation}

We conclude that other than for $L$ even, the system hesitates between
two values for the number of particles in the ground state, $L/6\pm1/2$.
This gives a cancellation that reduces the number of ground states
we see for $L$ even, $2^{1/6}$, to the smaller number given above.
Other than is suggested by the Witten index, the growth is still exponential
in $L$. This is the second example, after the analysis for the square
lattice \cite{Huijse10}, where a non-trivial ground state degeneracy
is derived from a tiling argument. At this instance the analysis is
less complete though, since the full solution of the cohomology problem
is still open. There are two observations that make us confident to
conjecture that $(\phi)^{1/6}$ is the exact base for the exponential
growth. The first is that it nicely agrees with the fit to the numerical
data. The second is that the number of tilings for $L=15$ and $L=21$
seems to reproduce the number of zero energy states in the two non-trivial
momentum sectors, $(2\pi/3,4\pi/3)$ and $(4\pi/3,2\pi/3),$ namely
$t(7)=10=5+5$ for $L=15$ and $t(10)=44=22+22$ for $L=21$ (see
appendix \ref{subsec:Numerical-data-3xLodd}).

\begin{figure}
\includegraphics[height=6cm]{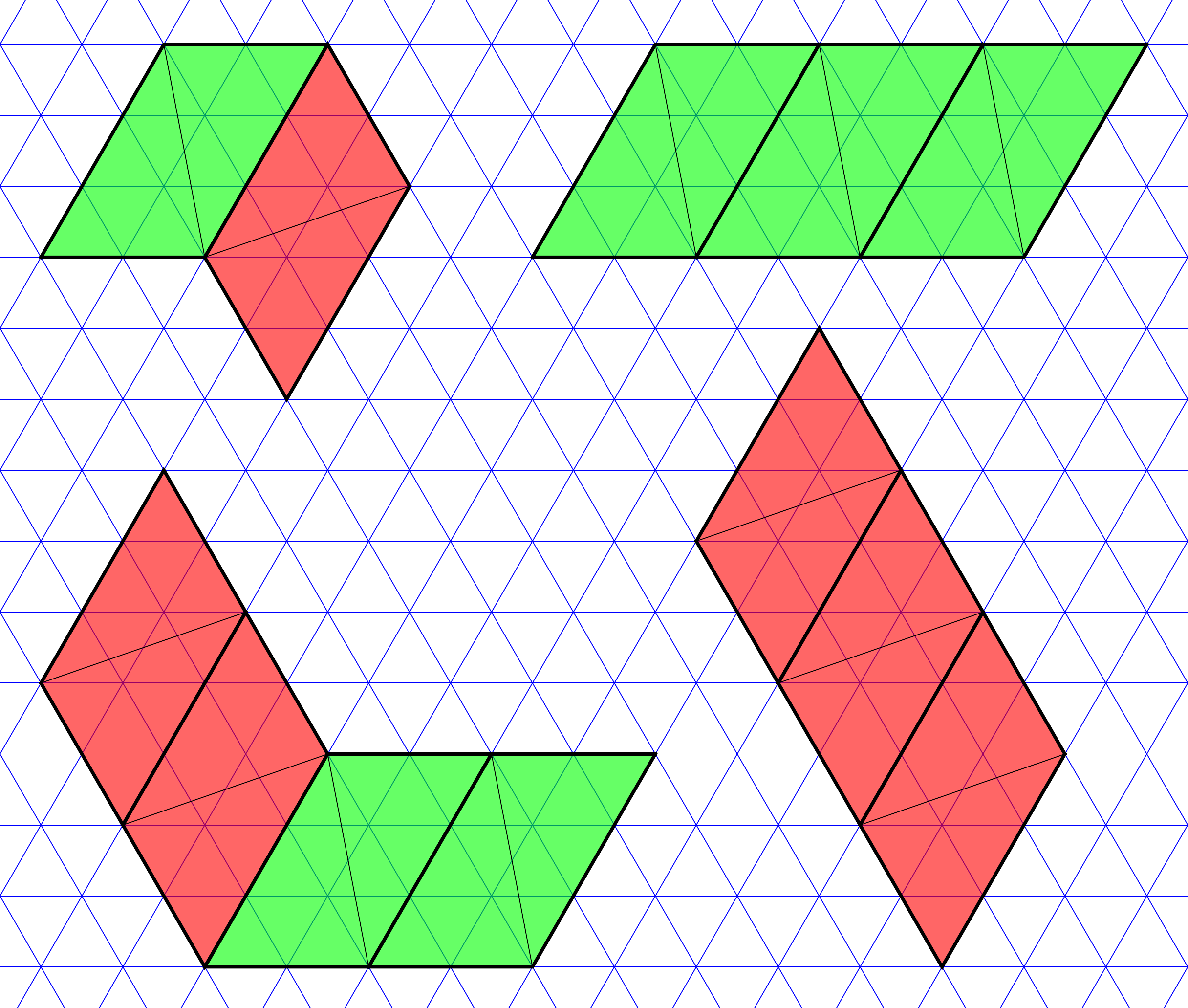}

\caption{We show the four types of tiles that are used to construct cohomology
elements for the 3 leg ladder of odd length, $L$. Note that the tiles
are composed of the two tiles introduced for the 3 leg ladder of even
length (see figure \ref{fig:Tiles}). \label{fig:TilesLodd}}

\end{figure}

\section{Complete solution for the 2 leg ladder\label{sec:2}}

In this section we discuss the simplest ladder obtained from the triangular
lattice by imposing the periodicities $\vec{u}=(0,2)$ and $\vec{v}=(L,0)$.
The resulting ladder geometry is depicted in figure \ref{fig:2-leg}.
This lattice is special because it is invariant under the exchange
of the 2 sites on a rung. A state on the lattice will have either
even or odd parity under this transformation. Consequently, this local
$\mathbb{Z}_{2}$ symmetry distinguishes an exponential number of
sectors. This observation will allow us to understand the exponential
ground state degeneracy. Furthermore, we find a mapping from the ladder
to the chain such that the entire spectrum can be understood. Remarkably,
in certain sectors the spectrum turns out to be gapped, in others
it is gapless and in yet other sectors we find phase separation. The
continuum limit of each of the gapless sectors is shown to be described
by a superconformal field theory with central charge $c=1$.

\begin{figure}
\includegraphics[height=2.5cm]{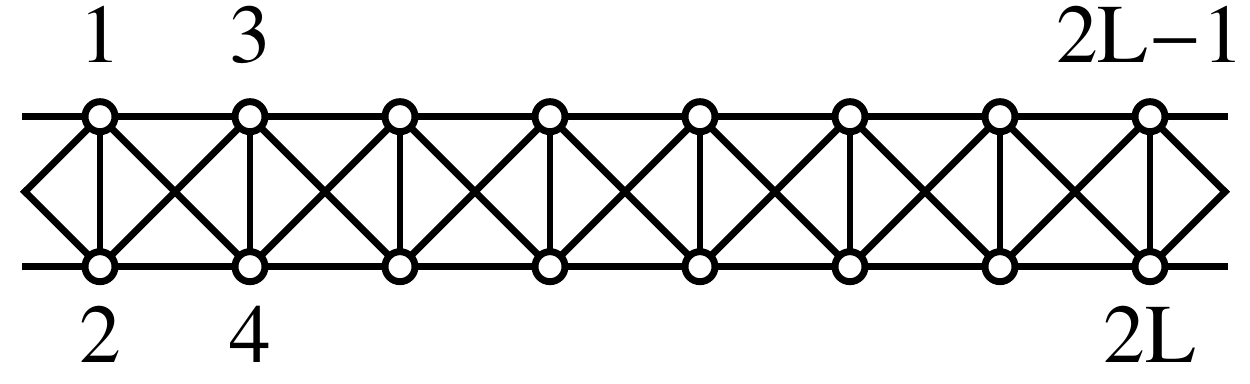}

\caption{The 2 leg ladder of length $L$, with sites labeled $1,\dots,2L$.
\label{fig:2-leg}}

\end{figure}

\subsection{Mapping the ladder to an infinite number of chains}

In this section we present a mapping that maps the supersymmetric
model on the ladder onto the supersymmetric model of a collection
of chains. As a first step we write the three possible states on a
rung $i$, that connects the sites $2i-1$ and $2i$, as follows:
the empty rung, which we denote by $|0\rangle_{i}$ and two states
with one particle on the rung: 
\begin{eqnarray}
|\pm\rangle_{i}=\frac{1}{\sqrt{2}}(c_{2i-1}^{\dag}\pm c_{2i}^{\dag})|0\rangle_{i}.
\end{eqnarray}
 The next thing to note is that the state $|-\rangle_{i}$ is localized
and carries zero energy. The localization is due to the fact that
there is negative interference for the particle to hop off the rung.
The state has zero energy because the hopping within the rung precisely
cancels the potential energy of the state. The effect of these localized
zero energy states on the rungs is that they effectively cut the ladder
into smaller ladders with open boundary conditions.

Another way to understand this breaking up of the Hilbert space is
via the parity operator, $T_{i}$, that interchanges the sites $2i-1$
and $2i$. One easily checks that the Hamiltonian commutes with this
operator. It follows that the parity on each rung is a good quantum
number that is conserved under the action of the Hamiltonian. Clearly,
the state $|-\rangle_{i}$ has odd parity under this tranformation,
whereas the states $|0\rangle_{i}$ and $|+\rangle_{i}$ have even
parity. The Hilbert space breaks up into disconnected sectors characterized
by the parity of each rung.

One thing that is important to note is that the rungs neighboring
a rung with odd parity must be empty due to the nearest neighbor exclusion
and must thus have even parity.

Suppose that the parity of two rungs that are $2+L_{+}$ rungs apart
is odd and that all rungs in between these two rungs have even parity.
It follows that there are $L_{+}$ consecutive rungs that can be in
either of the two even parity states. We will now show that the Hamiltonian
acting on these $L_{+}$ consecutive rungs with even parity maps to
the supersymmetric Hamiltonian on the chain of length $L_{+}$ up
to an overall factor of 2. In this mapping the empty rung will map
to the empty site on the chain and the state $|+\rangle_{i}$ will
map to occupied site on the chain. It is easy to see that the Hilbert
spaces are indeed identical. On the chain there are two states per
site, on the ladder there are two states per rung, furthermore, the
nearest neighbor exclusion ensures on the ladder that two neighboring
rungs cannot both be in the $|+\rangle$ state and on the chain that
two neighboring sites cannot both be occupied.

Let us introduce the creation operator $\tilde{c}_{i}^{\dag}\equiv(c_{2i-1}^{\dag}+c_{2i}^{\dag})/\sqrt{2}$,
which creates the $|+\rangle_{i}$ from the empty rung. Furthermore,
we introduce the number operator $\tilde{n}_{i}=n_{2i-1}+n_{2i}$
and the projection operator $\tilde{P}_{\langle j\rangle}=(1-\tilde{n}_{(j-1)})(1-\tilde{n}_{(j+1)})$.
Note that the latter is indeed a projection operator, because due
to the nearest-neighbor exclusion $\tilde{n_{i}}$ only takes the
values 0 or 1. When we label the $L_{+}$ even parity rungs 1 through
$L_{+}$, the Hamiltonian on that part of the system reads 
\begin{eqnarray}
H_{+} & = & H_{{\rm kin}}+H_{{\rm pot}}=\sum_{i=0}^{2L_{+}}\sum_{j\in\langle i\rangle}P_{\langle i\rangle}c_{i}^{\dag}c_{j}P_{\langle i\rangle}+\sum_{i=0}^{2L_{+}}P_{\langle i\rangle}.
\end{eqnarray}
Expanding these terms and carefully doing the algebra, we can rewrite
this Hamiltonian in terms of the operators we just introduced which
are defined on a chain of $L_{+}$ sites (we deferred the details
to the appendix \ref{sec:Mapping2legHam}). What we find is

\begin{eqnarray}
 &  & H_{+}=\sum_{j=1}^{L_{+}}\sum_{i=j\pm1}2\tilde{P}_{\langle j\rangle}\tilde{c}_{j}^{\dag}\tilde{c}_{i}\tilde{P}_{\langle i\rangle}+\sum_{j=1}^{L_{+}}2\tilde{P}_{\langle j\rangle}=2H_{{\rm chain}},
\end{eqnarray}
which is precisely twice the Hamiltonian on a chain of $L_{+}$ sites
with an additional empty site on each end of the chain. This is equivalent
to saying that the chain has open boundary conditions. Finally, we
note that the Hamiltonian in the sector where all rungs of the ladder
have even parity maps to the Hamiltonian on the chain with periodic
boundary conditions, again up to a factor of 2. It follows that in
the sector with odd parity on the rungs $\{l_{1},\dots,l_{s}\}$ and
even parity on all other rungs, the Hamiltonian on the 2 leg ladder
takes the form

\begin{equation}
H=2\sum_{m=1}^{s}\Big[\sum_{j=l_{m}+1}^{l_{m+1}-1}\Big(\sum_{i=j\pm1}\tilde{P}_{\langle j\rangle}\tilde{c}_{j}^{\dag}\tilde{c}_{i}\tilde{P}_{\langle i\rangle}+\tilde{P}_{\langle j\rangle}\Big)\Big],
\end{equation}
where we defined $l_{s+1}\equiv l_{1}$. 

This mapping is very powerful because we have very good understanding
of the supersymmetric model on the chain \cite{fendley-2003-90,Huijse11}.
In the next sections we will first see how we can use it to get a
closed expression for the number of zero energy states and then how
it allows us to fully understand the low lying spectrum.

\subsection{Ground state partition sum}

In the previous paragraph we have seen that the state with odd parity
on the rung is a dark state: it is localized and decoupled from the
rest of the system. In addition it carries zero energy. The effect
of these zero energy states is that they effectively cut the ladder
into smaller ladders. These smaller ladders can be mapped to the chain,
for which we know the number of zero energy states. Remember that
the chain of length $1\mod3$ has no zero energy states. It follows
that in a zero energy state on the ladder the number of rungs separating
two odd parity rungs cannot be $3l$ rungs apart. This is because
the even parity rungs will map to an open chain of length $3l-2$
for which there is no zero energy state.

Carefully counting the number of zero energy states this way, one
obtains the ground state partition sum as a function of the fugacity
$z$: 
\begin{eqnarray}
Z_{L}(z) & \equiv & \textrm{Tr}_{\textrm{\small{GS}}}z^{F}\nonumber \\
 & = & z(Z_{L-2}(z)+2Z_{L-3}(z))-\Delta_{L},
\end{eqnarray}
 with 
\begin{eqnarray}
\Delta_{L}=\left\{ \begin{array}{ll}
0 & \textrm{for \ensuremath{L=3l} and \ensuremath{L=3l-1}}\\
z^{l+1}+z^{l} & \textrm{for \ensuremath{L=3l+1}}
\end{array}\right.
\end{eqnarray}
Since the number of sites, $NM$, of the ladder is $2L$, it follows
immediately from the recursion relation that there will be zero energy
states for $f/(NM)\in[1/6,1/4]$.

To obtain the number of zero energy ground states we solve the recursion
relation for $Z_{L}(z)$ and set $z=1$. Using $Z_{2}=3z$, $Z_{3}=5z$
and $Z_{4}=2z^{2}+z$, we find
\begin{eqnarray}
Z_{L=3l+a}(z=1)=(-1)^{L}\lambda_{1}^{L/2}+(-1)^{L}(\lambda_{2})^{L/2}+\lambda_{3}^{L/2}-(-1)^{a},
\end{eqnarray}
with $a=-1,0,1$ and $\lambda_{1,2}$ and $\lambda_{3}$ the complex
and real solutions to $-4+\lambda-2\lambda^{2}+\lambda^{3}=0$, respectively.
Note that $\lambda_{2}^{*}=\lambda_{1}$. For large $L$, we have
\begin{eqnarray}
Z_{L}(z=1)\approx\lambda_{3}^{L/2}\approx1.5214^{L}.\label{eq:Partfn2xL}
\end{eqnarray}
 It follows that the ground state entropy per site is 
\begin{eqnarray}
S_{\textrm{GS}}=\frac{1}{2L}\ln Z_{L}\approx\frac{\ln1.5214}{2}=0.2098\dots
\end{eqnarray}
Equivalently, the Witten index is obtained by setting $z=-1$. We
find
\begin{eqnarray}
W_{L}=(\frac{-1+\imath\sqrt{7}}{2})^{L}+(\frac{-1-\imath\sqrt{7}}{2})^{L}.
\end{eqnarray}
For large $L$ the Witten index grows as $\sim1.14^{L}$.

\subsection{Spectrum}

\subsubsection{Gapped at 1/4 filling}

For $L$ even there are two zero energy states with $L/2$ particles.
The two ground states are product states of resonating single particle
states ($|-\rangle$) on every other rung. One of the states has particles
resonating on all odd rungs, the other on all even rungs. For $f$
even (odd) translation by two rungs maps each ground state to $-$($+$)
itself, so the eigenvalue of the translation operator obeys: $t^{2}=-1$
for $f$ even and $t^{2}=1$ for $f$ odd. It follows that the translationally
invariant ground states have momenta: 
\begin{eqnarray*}
p_{1}=\pi/2\ \textrm{and}\ p_{2}=3\pi/2\ \textrm{for \ensuremath{f} even}\\
p_{1}=0\ \textrm{and}\ p_{2}=\pi\ \textrm{for \ensuremath{f} even}.
\end{eqnarray*}
Note that quarter filling is the highest possible density for this
system due to the nearest-neighbor exclusion. 

For $L$ odd the maximal number of particles in the system is $(L-1)/2$.
We find that there are $2L$ zero energy states with $f=(L-1)/2$.
Pictorially, we can write these states as
\begin{equation}
|A\rangle=|0\rangle\prod_{k=1}^{f}|0-\rangle\quad\textrm{and}\quad|B\rangle=(|+0\rangle-|0+\rangle)/\sqrt{2}\prod_{k=1}^{f-1}|0-\rangle,
\end{equation}
where $0$ denotes an empty rung and $\pm$ denote the single particle
states with even and odd parity, respectively. Clearly, there are
$L$ states of each type. These $L$ states are related by translations
by one rung. It follows that the translationally invariant ground
states have momenta $p_{k}=2\pi k/L$ with $k=0,\dots L-1$ and there
are 2 in each momentum sector. The two types of states occur in different
parity sectors: the states of type $A$ occur in the sector with $f$
rungs with odd parity and the states of type $B$ occur in the sector
with $f-1$ rungs with odd parity. Note that in the latter sector,
there can be 4 consecutive rungs with even parity, which can be mapped
to a 2-site open chain. Indeed the ground state on a 2-site open chain
is $(|\bullet\circ\rangle-|\circ\bullet\rangle)/\sqrt{2}$.

At quarter filling there is a gap to charge-neutral excitations. The
excitations at quarter filling follow from changing one of the single
particle states from $|-\rangle$ to $|+\rangle$. This costs an energy
$\Delta=2$. For $L$ even there are L such excitations, whereas for
$L$ odd there are $L^{2}-2L$ such excitations%
\footnote{Note that for $L$ odd there is the extra possibility of leaving the
$|-\rangle$ states intact, but exciting the length 2 ladder to its
first excited state, however this costs an energy $\Delta=4$.%
}. Consequently, the spectrum is gapped to chargeless excitations,
the gap is $\Delta=2$ and the first band above the ground state is
completely flat. Two $|+\rangle$-excitations obey a certain exclusion
statistics, so the distribution of states over the momentum sectors
is not uniform, but the band is still flat at $E=2\Delta=4$. Continuing
this way, the highest energy state is reached at $E=L$ for $L$ even
(for $L$ odd the highest energy state is found to have $E=L+1$).

\subsubsection{Gapless at 1/6 filling}

At 1/6 filling the number of ground states is exponential in the length.
In the sector with even parity on all rungs there are two zero energy
states with $f=j$ for $L=3j$ corresponding to the two ground states
of the periodic chain of length $L=3j$. Similarly, there is one zero
energy states with $f=j$ for $L=3j\pm1$ corresponding to the ground
state of the periodic chain of length $L=3j\pm1$. In the sector with
$j$ rungs with odd parity, the zero energy states can be depicted
as
\begin{equation}
|{\rm gs}\rangle=\begin{cases}
|0-0\rangle\prod_{k=1}^{j-1}|0-0\rangle & \textrm{for \ensuremath{L=3j}},\\
|00-\rangle\prod_{k=1}^{j-1}|0-0\rangle & \textrm{for \ensuremath{L=3j+1}},\\
|-0\rangle\prod_{k=1}^{j-1}|0-0\rangle & \textrm{for \ensuremath{L=3j-1}.}
\end{cases}
\end{equation}
These states are clearly product states, which implies a finite correlation
length and thus a gap to charge-neutral excitations. In contrast the
ground states in the sector with even parity on all rungs, correspond
the the ground states on a chain, which are known to be critical.
It follows that the spectrum is gapless in that sector. In fact, the
supersymmetric model on the chain is well understood and the low-energy
spectrum is known to be described by an $\mathcal{N}=(2,2)$ superconformal
field theory with central charge, $c=1$ \cite{fendley-2003-90,Huijse11}.
In \cite{Huijse11} a complete identification between states in the
spectrum of chains of finite length and fields in the continuum theory
is given. This identification allows us to compute the gap scaling
for the 2 leg ladder in each parity sector. Interestingly, we will
find that the lowest non-zero energy state is not in the sector with
even parity on all rungs. 

The finite-size scaling of the energy of a gapless system depends
on the boundary conditions:
\begin{equation}
E_{\textrm{chain}}=b\pi E_{\textrm{SCFT}}v_{F}/L+\mathcal{O}(1/L^{2}),
\end{equation}
with $b=1$ for open and $b=2$ for periodic boundary conditions.
For the supersymmetric model on the chain the Fermi velocity was found
to be $v_{F}=9\sqrt{3}/4$ \cite{Huijse11}. For the periodic chain
the first excited states have $E_{\textrm{SCFT}}=2/3$ for $L=3j$
and $E_{\textrm{SCFT}}=1/3$ for $L=3j\pm1$. Using the fact that
$H=2H_{\textrm{chain}}$, we find that the finite-size energies of
these states on the ladder are 
\begin{eqnarray}
E_{\textrm{ladder}}=\pi E_{\textrm{SCFT}}9\sqrt{3}/L=\left\{ \begin{array}{ll}
6\pi\sqrt{3}/L & \textrm{for \ensuremath{L=3j}}\\
3\pi\sqrt{3}/L & \textrm{for \ensuremath{L=3j\pm1}.}
\end{array}\right.
\end{eqnarray}
For the open chain with $L=3j$ and $L=3j-1$ we find that $E_{\textrm{SCFT}}=1$
for the first excited state. However, for the open chain of length
$L=3j+1$ there is no zero energy state. Instead the lowest energy
state at $f=j$ has $E_{\textrm{SCFT}}=1/3$ and this state has a
superpartner with the same energy at $f=j+1$. Using the above, we
find that the finite-size energies of these states on the ladder are
\begin{eqnarray}
E_{\textrm{ladder}}=\frac{\pi E_{\textrm{SCFT}}9\sqrt{3}}{2L}=\left\{ \begin{array}{ll}
9\pi\sqrt{3}/(2L) & \textrm{for \ensuremath{L=3j} and \ensuremath{L=3j-1}}\\
3\pi\sqrt{3}/(2L) & \textrm{for \ensuremath{L=3j+1}.}
\end{array}\right.
\end{eqnarray}
Comparing all the above energies, we find that the lowest non-zero
energy state of a ladder of length $L$ corresponds to a state on
an open chain of length $3m+1=L-x$, with $x=3,5,7$ for $L=3j+1,3j,3j-1$,
respectively. The scaling of the energy of the first excited state
thus reads 
\begin{equation}
E_{1}=\frac{3\pi\sqrt{3}}{2(3m+1)}=\left\{ \begin{array}{cc}
\frac{3\pi\sqrt{3}}{2(L-3)} & \textrm{for \ensuremath{L=3j+1}}\\
\frac{3\pi\sqrt{3}}{2(L-5)} & \textrm{for \ensuremath{L=3j}}\\
\frac{3\pi\sqrt{3}}{2(L-7)} & \textrm{for \ensuremath{L=3j-1}}
\end{array}\right.\label{eq:E1}
\end{equation}
The corresponding states are found in the sectors with odd parity
on the rungs $\{l\},\{l,l+2\}$ and $\{l,l+2,l+4\}$, respectively,
where $l\in(1,\dots,L)$. Note that the states on the chains have
$f=j-1,j-2,j-3$, respectively, so that the total number of fermions
is always $f=j$. There are no zero energy states in these sectors.
Finally, it is clear that the first excited state is $L$-fold degenerate:
one for each momentum sector $p_{k}=2\pi k/L$.

Numerical results for ladders up to $L=20$ confirm these observations.
In particular, a finite-size scaling analysis shows that the energy
of the first excited state is nicely fitted with $a_{1}/L+a_{2}/L^{2}+a_{3}/L^{3}$.
We extract $a_{1}\approx8.17$ for $L=3j+1$ and $a_{1}\approx8.15$
for $L=3j$, which is good agreement with the theoretical value $3\pi\sqrt{3}/2\approx8.162$.
The result for $L=3j-1$ ($a_{1}\approx8.42$) is less accurate. Indeed,
from (\ref{eq:E1}) it is clear that for this length the finite-size
corrections are the strongest.

\subsubsection{Phase separation at intermediate fillings}

We can continue the logic of the previous section to infer the energy
of the first excited state for fillings $f/(2L)\in[1/6,1/4]$. Starting
from the 1/6 filling side, there are two obvious possibilities for
the first excited state at higher densities: the density is increased
either by having more particles in the part of the system where all
rungs have even parity, or by going to a parity sector where for a
larger part of the ladder even and odd parity rungs alternate. From
the continuum theory we know that in the first case the energy scales
parabolically with the fermion number: 
\begin{equation}
E_{\textrm{SCFT}}(\tilde{f})=\frac{3}{2}\tilde{f}^{2}-\frac{1}{2}\tilde{f},
\end{equation}
 where $\tilde{f}=f_{{\rm chain}}-f_{{\rm gs}}-1/3$. For the ladder
we thus have that the energy of such states with $f=j+j'$ particles
scales as 
\begin{eqnarray}
E(j') & = & \frac{9\pi\sqrt{3}}{2}\frac{1}{L-x}\left(\frac{3}{2}\tilde{f}^{2}-\frac{1}{2}\tilde{f}\right)\label{eq:Eparabolic}\\
 & = & \frac{3\pi\sqrt{3}}{L-x}\left(\frac{1}{2}+\frac{9}{4}(j'^{2}-j')\right),
\end{eqnarray}
where $x=3,5,7$ for $L=3j+1,3j,3j-1$, respectively. 

In the second case, one can show that the first excited state with
$f=j+j'$ particles corresponds to the lowest energy state of a chain
of length $3m'+1=L-6j'-x$ with $m'$ fermions for $j'$ even, and
$x$ as before. For $j'$ odd we have $3m'+1=L-6(j'-1)-x$ with $m'+1$
fermions. It follows that the energy of this state is 
\begin{eqnarray}
E(j') & = & \frac{3\pi\sqrt{3}}{2}\frac{1}{3m'+1}\label{eq:Escalecharged}\\
 & = & \frac{3\pi\sqrt{3}}{L-x}\left(\frac{1}{2}+\frac{3j'}{L-x}+\frac{18j'^{2}}{(L-x)^{2}}\right)+\mathcal{O}(j'^{3}),
\end{eqnarray}
 for $j'$ even and similarly for $j'$ odd with $j'$ replaced by
$j'-1$ everywhere. It is readily checked that for $j'>0$, this energy
is always smaller than the energy for the first case (\ref{eq:Eparabolic}).

It follows that at some intermediate filling between $1/6$ and $1/4$,
the lowest energy states show phase separation, where part of the
ladder is in the $|0-0\dots-0\rangle$ phase and the rest is in the
$1/6$ filled critical phase. So starting from $1/6$ filling, both
charged and charge-neutral excitations above the ground state manifold
are gapless. We know, however, that at $1/4$ filling, there is a
gap to charge-neutral excitations. The question is thus at what filling
this gap opens up. Suppose the density is $1/4-\epsilon$ in the thermodynamic
limit. We know that the first excited state shows phase separation.
The energy of the first excited state will scale as $1/l$, where
$l$ is the length of the part of the system that is in the $1/6$
filling phase. One easily checks that $l\propto\epsilon L$. It follows
that for any finite $\epsilon$ the gap collapses.

Finally, we note that using parity the gapless phase separates into
\textcolor{black}{an exponential number of sectors}%
\footnote{To obtain an estimate on the number of gapless sectors in the continuum
we proceed as follows. As a sufficient condition for the sector to
be gapless, we require that there is at least one part of the system,
say $L_{g}$ consecutive rungs, with even parity, such that $L_{g}$
goes to infinity as $L$ goes to infinity. To make things concrete,
let us take $L_{g}=L/2$. The parity of the rungs in the rest of the
system then merely has to be such that a zero energy state exists.
From the ground state partition function, \ref{eq:Partfn2xL}, we
easily see that the number of sectors that obey these conditions can
be estimated as $\sim1.5^{L-L_{g}}=1.5^{L/2}$. This shows that there
is indeed an exponential number of gapless sectors.%
}. In the continuum limit, each of these sectors is described by a
(sum of) superconformal field theories with central charge $c=1$.

\subsection{Relation to other superfrustrated ladders}

\begin{figure}
\includegraphics[height=4cm]{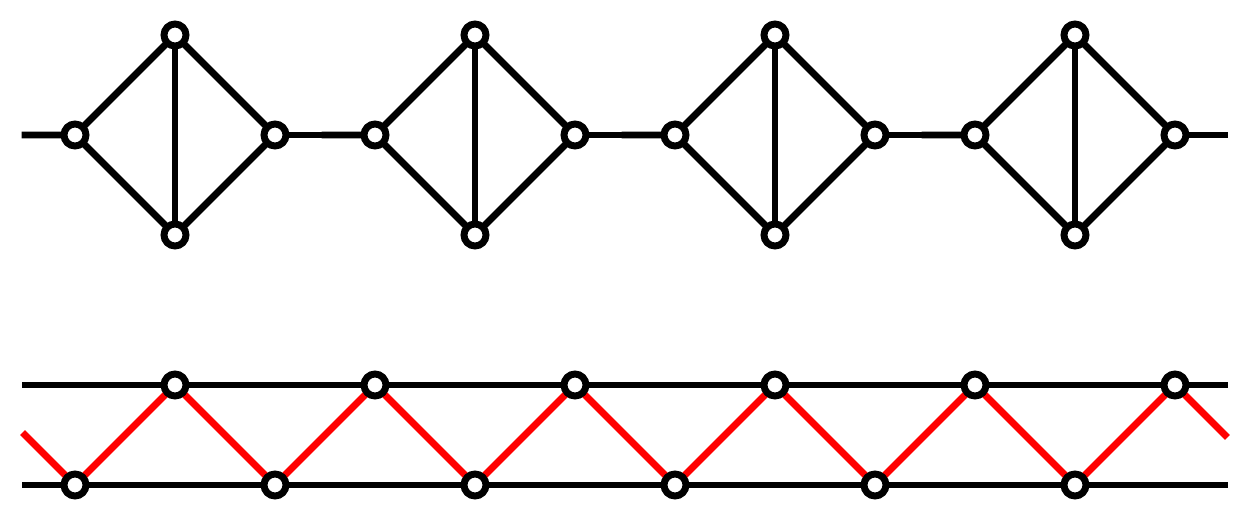}

\caption{We show the octagon-square ladder (above) and the zig-zag ladder (below).
In the octagon-square ladder the vertical links are referred to in
the text as links across the plaquette. This example shows a ladder
of 4 plaquettes. In both cases periocid boundary conditions in the
horizontal direction are implied. In the zig-zag ladder the red line
serves as a guide to the eye to see that the zig-zag ladder is a chain
with nearest and next-nearest neighbor interactions. \label{fig:otherladders}}

\end{figure}

In this section we briefly discuss two other ladder geometries that
share certain properties with the 2 leg ladder obtained from the triangular
lattice. 

The first is the octagon-square ladder depicted in figure \ref{fig:otherladders},
which can be obtained from the octagon-square lattice by imposing
doubly periodic boundary conditions. The solution to the cohomology
problem for the supersymmetric model on the octagon-square lattice
is known for general boundary conditions \cite{fendley-2005-95}.
For the octagon-square ladder with $N$ plaquettes, the number of
zero energy states is $2^{N}+1$. This exponential ground state degeneracy
can be understood in terms of a local symmetry, just as for the triangular
ladder. Let us label the sites around a plaquette in a clock wise
direction starting at the leftmost site as $i,i+1,i+2,i+3$. It follows
that the sites $i+1$ and $i+3$ are connected by the link across
the plaquette. It is clear that the ladder is invariant under the
exchange of these two sites. Moreover, the odd parity single particle
state, $|-\rangle_{i}\equiv(c_{i+1}^{\dagger}-c_{i+3}^{\dagger})/\sqrt{2}|\emptyset\rangle$,
is again a dark state with zero energy. So having odd parity on a
plaquette effectively cuts the ladder into an open ladder. Finally,
there is again a mapping to the chain Hamiltonian if one restricts
to the sector with even parity on the plaquettes \cite{PF10}.
There are three main differences. First, there are zero energy states
for all parity sectors. This can be seen as follows. The odd parity
plaquettes cut the ladder into smaller ladders of lengths $l_{j}=0\mod3$
for all $j$. It follows that there is one zero energy state in all
parity sectors, except in the sector with even parity, where we effectively
have a periodic chain of length $3N$ which has 2 zero energy states.
Note that there are $2^{N}$ parity sectors and thus $2^{N}+1$ zero
energy states. Second, all zero energy states have filling 1/4. Remember
that the ground states on chains of length $l=0\mod3$ have 1 particle
on every 3 sites, which corresponds to 1 particle per plaquette. Clearly,
the plaquette with odd parity also have 1 particle per plaquette.
Therefore, all ground states have 1/4 filling. Third, and most importantly,
the spectrum is not described by a conformal field theory. This can
be seen by carefully carrying out the mapping to the chain. One then
finds that the effective Hamiltonian is the Hamiltonian for the staggered
chain, $H=\{Q_{\lambda},Q_{\lambda}^{\dagger}\}$, with $Q_{\lambda}=\sum_{j}\lambda_{j}c_{j}P_{\langle j\rangle}$,
and staggering $\lambda_{j}=\sqrt{2}$ for $j=0\mod3$ and $\lambda_{j}=1$
otherwise. This model was studied in detail in Refs. \cite{Fendley10a,Fendley10b,Huijse11b}
and has a gapped spectrum, unless the boundary conditions are such
that they allow for a massless kink. A massless kink exists on open
chains of length $l\equiv0\mod3$ and has a $1/l^{2}$ dispersion.
It follows that certain sectors of the octagon-square ladder are gapless,
but not conformal, and others are gapped. 

The second model to which our 2 leg ladder bears certain similarities
is the zig-zag ladder (see figure \ref{fig:otherladders}). This ladder
is obtained from the square lattice by imposing periodicities $\vec{u}=(1,2)$
and $\vec{v}=(L,0)$. The similarities are not as striking as those
observed for the octagon-square ladder, but we believe that they can
be quite useful in trying to get a better understanding of the zig-zag
ladder. We find that there are two important similarities. First,
there are similarities in the ground state degeneracy. The cohomology
problem for the square lattice with doubly periodic boundary conditions
was solved for a large class of periodicities \cite{Huijse08b,Huijse10}.
It was found that for the zig-zag ladder there is an exponential number
of zero energy states that occur in the range of fillings between
$1/5$ and $1/4$. The counting of the ground states is rather non-trivial
and can be formulated as a tiling problem. Second, there are similarities
in the spectrum. The spectrum of the zig-zag ladder was investigated
and it was found that there are gapped as well as gapless sectors
\cite{Huijse08b,HuijseT10}. In particular, the spectrum was found
to be gapped at $1/4$ filling, and the 4 zero energy states are simple
product states with a particle resonating on every fourth diagonal
rung. This is very similar to the $1/4$ filled ground states of the
2 leg triangular ladder (although in this case there are only 2).
For the zig-zag ladder is was found that the spectrum is also gapped
at $1/5$ filling. The gapless phase was observed at intermediate
fillings. In particular, $2/9$ filling seems to play an important
role. We think that the ladder considered here can shed light on the
physics of the zig-zag ladder. What we observe here at $1/6$ filling
may be a cartoon version of what is going on in the zig-zag ladder
at $2/9$ filling. A promising route could be to try to perturb the
ladder considered here, in such a way that the dark states become
mobile. Another possibility is to study the zig-zag ladder in the
limit of zero staggering on every third site of the zig-zag chain.
In this limit the spectrum bears very strong resemblances to that
of the ladder studied here. In fact, it also falls apart into many
sectors, some gapped, some gapless, with the gapless sectors described
by (sums of) superconformal field theories with central charge $c=1$.

\section{Conclusions}

In this paper we combined numerical and analytical techniques to investigate
the ground state structure of the supersymmetric model on the triangular
lattice with doubly periodic boundary conditions. Previous studies
showed that this system has an exponential ground state degeneracy,
leading to an extensive ground state entropy \cite{vanEerten05,Engstrom09}.
Furthermore, zero energy states were proven to exist between $1/7$
and $1/5$ filling \cite{Jonsson05p}. Our numerical studies confirm
these results. The two main results we obtain in this work for the
full 2D triangular lattice are an analytic upper bound on the range
of fillings for which zero energy states can exist, namely between
$1/8$ and $1/4$ filling, and the numerical observation that the
system typically exhibits a two dimensional flatband dispersion, which
we believe to persist for the full 2D model. 

We also studied 3 ladder geometries in more detail and presented a
variety of new results. For the 4 leg ladder we present a slightly
sharper upper bound to the total number of ground states and we conjecture
a sharper bound on the range of fillings for which zero energy states
exist. For the 3 leg ladder we find that the ground state structure
is very different for odd and even lengths, $L$. For $L$ even we
find the total number of ground states for each particle number sector
analytically by solving the cohomology problem, for odd $L$ this
result is still lacking. We do, however, present a rigorous lower
bound on the cohomology, which is in perfect agreement with the numerical
data, and conjecture that it is exact \textcolor{black}{\cite{Schoutens12}}.
Furthermore, numerical computations for the ground state degeneracy
in each momentum sector clearly show a flatband for $L$ even for
momenta in both the vertical and the horizontal direction. For $L$
odd we do not observe a flatband, instead the ground states all have
$p_{x}=2\pi-p_{y}=2\pi k/3$, with $k=0,1,2$. Interestingly, many
of the results for the 3 leg ladder can be understood from a mapping
between ground states and tilings. This is thus another example where
a non-trivial relation between tilings and ground states of the supersymmetric
model is observed \cite{Jonsson06,Jonsson05p,Huijse08b,Huijse10}. 

Finally, we discuss the full solution for the 2 leg ladder. For this
system the ground state degeneracy can be understood in terms of a
local $\mathbb{Z}_{2}$ symmetry of the lattice. Furthermore, we find
a mapping from the ladder to the chain such that the entire spectrum
can be understood. Remarkably, in certain sectors the spectrum turns
out to be gapped, in others it is gapless and in yet other sectors
we find phase separation. The continuum limit of each of the gapless
sectors is shown to be described by a superconformal field theory
with central charge $c=1$. We argue that this model can shed light
on the rich physics observed for the supersymmetric model on the zig-zag
ladder \cite{Huijse08b,HuijseT10}.
\begin{acknowledgments}
We would like to thank J. Jonsson and B. Nienhuis for discussions
at the early stages of this work and E. Berg and P. Fendley for discussions
on the 2 leg ladder. We wish to acknowledge the SFI/HEA Irish Centre for High-End
Computing (ICHEC) for the provision of computational facilities and
support. L.H. acknowledges funding from the Netherlands
Organisation for Scientific Research (NWO). D. M. was supported by
the U.S. Department of Energy under Contract No. DE-FG02-85ER40237
and by the Science Foundation Ireland Grant No. 08/RFP/PHY1462. J.V.
was supported by the Science Foundation Ireland through the Principal
Investigator Award 10/IN.1/I3013.
\end{acknowledgments}
\bibliographystyle{plain}
\bibliography{literature,bibnumerical}

\appendix

\section{Mapping of 2 leg ladder Hamiltonian to the chain Hamiltonian\label{sec:Mapping2legHam}}

In this appendix we give some details of the computation that leads
to the identification of the Hamiltonian on the 2 leg ladder in the
even parity sector with the Hamiltonian on the chain. Let us introduce
the creation operator $\tilde{c}_{i}^{\dag}\equiv(c_{2i-1}^{\dag}+c_{2i}^{\dag})/\sqrt{2}$,
which creates the $|+\rangle_{i}$ from the empty rung. Furthermore,
we introduce the number operator $\tilde{n}_{i}=n_{2i-1}+n_{2i}$
and the projection operator $\tilde{P}_{\langle j\rangle}=(1-\tilde{n}_{(j-1)})(1-\tilde{n}_{(j+1)})$.
Note that the latter is indeed a projection operator, because due
to the nearest-neighbor exclusion $\tilde{n_{i}}$ only takes the
values 0 or 1. When we label the $L_{+}$ even parity rungs 1 through
$L_{+}$, the Hamiltonian on that part of the system reads 
\begin{eqnarray}
H_{+} & = & H_{{\rm kin}}+H_{{\rm pot}}=\sum_{i=0}^{2L_{+}}\sum_{j\in\langle i\rangle}P_{\langle i\rangle}c_{i}^{\dag}c_{j}P_{\langle i\rangle}+\sum_{i=0}^{2L_{+}}P_{\langle i\rangle}.
\end{eqnarray}
Expanding these terms, we can rewrite this Hamiltonian in terms of
the operators we just introduced which are defined on a chain of $L_{+}$
sites. We first rewrite the latter term. 
\begin{eqnarray}
H_{\textrm{{\rm pot}}} & = & \sum_{j=1}^{L_{+}}P_{\langle2j-1\rangle}+P_{\langle2j\rangle}\nonumber \\
 & = & \sum_{j=1}^{L_{+}}[(1-n_{2j})+(1-n_{2j-1})](1-n_{2(j-1)-1}-n_{2(j-1)}+n_{2(j-1)-1}n_{2(j-1)})\nonumber \\
 &  & \times(1-n_{2(j+1)-1}-n_{2(j+1)}+n_{2(j+1)-1}n_{2(j+1)})\nonumber \\
 & = & \sum_{j=1}^{L_{+}}(2-n_{2j}-n_{2j-1})(1-n_{2(j-1)-1}-n_{2(j-1)})(1-n_{2(j+1)-1}-n_{2(j+1)})\nonumber \\
 & = & \sum_{j=1}^{L_{+}}(2-\tilde{n}_{j})(1-\tilde{n}_{(j-1)})(1-\tilde{n}_{(j+1)})\nonumber \\
 & = & \sum_{j=1}^{L_{+}}(2\tilde{P}_{\langle j\rangle}-\tilde{n}_{j}),
\end{eqnarray}
 where we have used the fact that $n_{j}n_{\textrm{\small\ensuremath{i}next to \ensuremath{j}}}=0$
for the hardcore fermions in the third and the last line. To rewrite
the kinetic term we first note that 
\begin{eqnarray*}
P_{\langle2j\rangle}c_{2j}^{\dag}=(1-n_{2j-1})\tilde{P}_{\langle j\rangle}c_{2j}^{\dag}\quad\textrm{and}\\
c_{2j}P_{\langle2j\rangle}=c_{2j}(1-n_{2j-1})\tilde{P}_{\langle j\rangle}=c_{2j}\tilde{P}_{\langle j\rangle},
\end{eqnarray*}
 where, in the last step, we use the fact that for hardcore fermions
$c_{2j}n_{2j-1}=0$. It follows that we can write the kinetic term
as 
\begin{eqnarray}
H_{\textrm{{\rm kin}}} & = & \sum_{j=1}^{L_{+}}[P_{\langle2j\rangle}c_{2j}^{\dag}c_{2j-1}P_{\langle2j-1\rangle}+P_{\langle2j-1\rangle}c_{2j-1}^{\dag}c_{2j}P_{\langle2j\rangle}\nonumber \\
 &  & +\sum_{i=j\pm1}(P_{\langle2j\rangle}c_{2j}^{\dag}+P_{\langle2j-1\rangle}c_{2j-1}^{\dag})(c_{2i-1}P_{\langle2i-1\rangle}+c_{2i}P_{\langle2i\rangle})]\nonumber \\
 & = & \sum_{j=1}^{L_{+}}[\tilde{P}_{\langle j\rangle}(1-n_{2j-1})c_{2j}^{\dag}c_{2j-1}\tilde{P}_{\langle j\rangle}+\tilde{P}_{\langle j\rangle}(1-n_{2j})c_{2j-1}^{\dag}c_{2j}\tilde{P}_{\langle j\rangle}\nonumber \\
 &  & +\sum_{i=j\pm1}(\tilde{P}_{\langle j\rangle}(1-n_{2j-1})c_{2j}^{\dag}+\tilde{P}_{\langle j\rangle}(1-n_{2j})c_{2j-1}^{\dag})(c_{2i-1}\tilde{P}_{\langle i\rangle}+c_{2i}\tilde{P}_{\langle i\rangle})]\nonumber \\
 & = & \sum_{j=1}^{L_{+}}[\tilde{P}_{\langle j\rangle}(c_{2j}^{\dag}c_{2j-1}+c_{2j-1}^{\dag}c_{2j})\tilde{P}_{\langle j\rangle}+\sum_{i=j\pm1}\tilde{P}_{\langle j\rangle}\sqrt{2}\tilde{c}_{j}^{\dag}\sqrt{2}\tilde{c}_{i}\tilde{P}_{\langle i\rangle}],
\end{eqnarray}
 where in the last line we used the fact that $n_{2j-1}c_{2j}^{\dag}c_{2j-1}=c_{2j}^{\dag}n_{2j-1}c_{2j-1}=0$
in the first term. In the second term we used the fact that $n_{2j}c_{2j-1}^{\dag}c_{2i-1}\tilde{P}_{\langle i\rangle}=c_{2j-1}^{\dag}c_{2i-1}n_{2j}\tilde{P}_{\langle i\rangle}=0$
because $\tilde{P}_{\langle i\rangle}$ contains the term $(1-n_{2j})$.
Similarly we have $n_{2j-1}\tilde{P}_{\langle i\rangle}=0$. Finally,
we note that $\tilde{P}_{\langle j\rangle}(c_{2j}^{\dag}c_{2j-1}+c_{2j-1}^{\dag}c_{2j})\tilde{P}_{\langle j\rangle}|0\rangle_{j}=0$
and $\tilde{P}_{\langle j\rangle}(c_{2j}^{\dag}c_{2j-1}+c_{2j-1}^{\dag}c_{2j})\tilde{P}_{\langle j\rangle}|+\rangle_{j}=|+\rangle_{j}$,
so we can simply replace this term by $\tilde{n}_{j}$.

Adding the potential and kinetic terms we obtain 
\begin{eqnarray}
 &  & H_{+}=\sum_{j=1}^{L_{+}}\sum_{i=j\pm1}2\tilde{P}_{\langle j\rangle}\tilde{c}_{j}^{\dag}\tilde{c}_{i}\tilde{P}_{\langle i\rangle}+\sum_{j=1}^{L_{+}}2\tilde{P}_{\langle j\rangle}=2H_{{\rm chain}}.
\end{eqnarray}

\section{Numerical data on ground state degeneracy\label{sec:Numerical-data}}

\subsection{Size $3\times L$, $L$ odd\label{subsec:Numerical-data-3xLodd}}

We show the ground state degeneracy for the 3 leg ladder of odd length
$L$ for the non-trivial sectors with particle number $f=(L\pm1)/2$
and momenta $p_{x}=2\pi-p_{y}=2\pi k/3$, with $k=0,1,2$. We also
indicate the total number of zero energy states, $N_{{\rm gs}}$,
and the Witten index, $W$. We present data for $L=3,\dots,21$.

\vspace{1cm}

\begin{footnotesize}

\begin{tabular}{cccc|cccc}
\hline 
\multicolumn{8}{c}{$3\times3$}\tabularnewline
\multicolumn{8}{c}{$N_{{\rm {gs}}}=2$, $W=-2$}\tabularnewline
\hline 
\hline 
 &  &  & $(f,k)$  & 0 & 1 & 2 & \tabularnewline
\hline 
 &  &  & 1  & 0  & 1  & 1  & \tabularnewline
 &  &  & 2  & 0  & 0  & 0  & \tabularnewline
\end{tabular}\qquad{}%
\begin{tabular}{cccc|cccc}
\hline 
\multicolumn{8}{c}{$3\times5$}\tabularnewline
\multicolumn{8}{c}{$N_{{\rm {gs}}}=1$, $W=1$}\tabularnewline
\hline 
\hline 
 &  &  & $(f,k)$  & 0 & 1 & 2 & \tabularnewline
\hline 
 &  &  & 2  & 1  & 0  & 0  & \tabularnewline
 &  &  & 3  & 0  & 0  & 0  & \tabularnewline
\end{tabular}\qquad{}%
\begin{tabular}{cccc|cccc}
\hline 
\multicolumn{8}{c}{$3\times7$}\tabularnewline
\multicolumn{8}{c}{$N_{{\rm {gs}}}=1$, $W=1$}\tabularnewline
\hline 
\hline 
 &  &  & $(f,k)$  & 0 & 1 & 2 & \tabularnewline
\hline 
 &  &  & 3  & 0 & 0  & 0  & \tabularnewline
 &  &  & 4  & 1  & 0  & 0  & \tabularnewline
\end{tabular}\qquad{}%
\begin{tabular}{cccc|cccc}
\hline 
\multicolumn{8}{c}{$3\times9$}\tabularnewline
\multicolumn{8}{c}{$N_{{\rm {gs}}}=8$, $W=-2$}\tabularnewline
\hline 
\hline 
 &  &  & $(f,k)$  & 0 & 1 & 2 & \tabularnewline
\hline 
 &  &  & 4  & 1  & 1  & 1  & \tabularnewline
 &  &  & 5  & 1  & 2  & 2  & \tabularnewline
\end{tabular}

\vspace{1cm}
\begin{tabular}{cccc|cccc}
\hline 
\multicolumn{8}{c}{$3\times11$}\tabularnewline
\multicolumn{8}{c}{$N_{{\rm {gs}}}=3$, $W=1$}\tabularnewline
\hline 
\hline 
 &  &  & $(f,k)$  & 0 & 1 & 2 & \tabularnewline
\hline 
 &  &  & 5  & 1  & 0  & 0  & \tabularnewline
 &  &  & 6  & 2  & 0  & 0  & \tabularnewline
\end{tabular}\qquad{}%
\begin{tabular}{cccc|cccc}
\hline 
\multicolumn{8}{c}{$3\times13$}\tabularnewline
\multicolumn{8}{c}{$N_{{\rm {gs}}}=5$, $W=1$}\tabularnewline
\hline 
\hline 
 &  &  & $(f,k)$  & 0 & 1 & 2 & \tabularnewline
\hline 
 &  &  & 6  & 3  & 0  & 0  & \tabularnewline
 &  &  & 7  & 2  & 0  & 0  & \tabularnewline
\end{tabular}\qquad{}%
\begin{tabular}{cccc|cccc}
\hline 
\multicolumn{8}{c}{$3\times15$}\tabularnewline
\multicolumn{8}{c}{$N_{{\rm {gs}}}=24$, $W=-2$}\tabularnewline
\hline 
\hline 
 &  &  & $(f,k)$  & 0 & 1 & 2 & \tabularnewline
\hline 
 &  &  & 7  & 3  & 5  & 5  & \tabularnewline
 &  &  & 8  & 3  & 4  & 4  & \tabularnewline
\end{tabular}\qquad{}%
\begin{tabular}{cccc|cccc}
\hline 
\multicolumn{8}{c}{$3\times17$}\tabularnewline
\multicolumn{8}{c}{$N_{{\rm {gs}}}=15$, $W=1$}\tabularnewline
\hline 
\hline 
 &  &  & $(f,k)$  & 0 & 1 & 2 & \tabularnewline
\hline 
 &  &  & 8  & 8  & 0  & 0  & \tabularnewline
 &  &  & 9  & 7  & 0  & 0  & \tabularnewline
\end{tabular}

\vspace{1cm}

\begin{tabular}{cccc|cccc}
\hline 
\multicolumn{8}{c}{$3\times19$}\tabularnewline
\multicolumn{8}{c}{$N_{{\rm {gs}}}=23$, $W=1$}\tabularnewline
\hline 
\hline 
 &  &  & $(f,k)$  & 0 & 1 & 2 & \tabularnewline
\hline 
 &  &  & 9  & 11  & 0  & 0  & \tabularnewline
 &  &  & 10  & 12  & 0  & 0  & \tabularnewline
\end{tabular}\qquad{}%
\begin{tabular}{cccc|cccc}
\hline 
\multicolumn{8}{c}{$3\times21$}\tabularnewline
\multicolumn{8}{c}{$N_{{\rm {gs}}}=132$, $W=-2$}\tabularnewline
\hline 
\hline 
 &  &  & $(f,k)$  & 0 & 1 & 2 & \tabularnewline
\hline 
 &  &  & 10  & 21  & 22  & 22  & \tabularnewline
 &  &  & 11  & 21  & 23  & 23  & \tabularnewline
\end{tabular}

\end{footnotesize}

\subsection{Size $3\times L$, $L$ even}
For the 3 leg ladder the total number of ground states is understood analytically as explained in section \ref{sub:3-leg-ladder}. The degeneracy per momentum sector can also be understood from the mapping between cohomology elements to tilings up to the mismatch given in (\ref{eqn:cohomtiling}). For this reason we do not give all the numerical results here, instead we explain the momentum assignment derived from the tilings for one example. We consider the 3 leg ladder of length $L=12$. The numerical computation of the ground state degeneracy per momentum sector reveals the following pattern.

\vspace{0.3cm}
\begin{tabular}{cccc|cccc}
\hline
\multicolumn{8}{c}{$3 \times 12$}\\
\multicolumn{8}{c}{$N_{\rm{gs}}= 130$, $W=130$}\\
\hline\hline
& & & $(k_x,k_y)$ &0&1&2&\\
\cline{4-7}
   &   &   & 0 & 4 & 3 & 3& \\
   &   &   & 1& 3 & 4 & 4& \\
   &   &   & 2& 3 & 4 & 3& \\
   &   &   & 3& 6 & 4 & 4& \\
   &   &   & 4& 3 & 3 & 3& \\
   &   &   & 5& 3 & 4 & 4& \\
   &   &   & 6& 4 & 3 & 3& \\
   &   &   & 7& 3 & 4 & 4& \\
   &   &   & 8& 3 & 3 & 3& \\
   &   &   & 9& 6 & 4 & 4& \\
   &   &   & 10& 3 & 3 & 4& \\
   &   &   & 11& 3 & 4 & 4&  
\end{tabular}
\vspace{0.3cm}

We now reconstruct this pattern via the tiling correspondence. One example of a tiling is given figure \ref{fig:Tiles}. Let us refer to this tiling as RRGGRG, which stand for the sequence of red (R) and green (G) tiles. The unit cell of this tiling contains 36 sites and thus corresponds to 36 ground states. We write the eigenvalues of translations along $\vec{u}=(0,M)$ and $\vec{v}=(N,0)$ as $t_{y}=e^{2\pi\imath k_{y}/M}$ and $t_{x}=e^{2\pi\imath k_{x}/N}$, respectively. It follows that the tiling RRGGRG has eigenvalues $t_x^{12}=1$ and $t_y^3=1$. Therefore the 36 ground states are distributed evenly over the 36 momentum sectors. This is also true for the ground states corresponding to the tilings RRRGGG and GGRRGR. We thus have so far accounted for 3 ground states in each momentum sector. The tiling with all red tiles and the tiling with all green tiles each have a unit cell of 6 sites. They each correspond to 6 ground states and taking into account the fermionic character of the particles we find the following for the translation eigenvalues. The red tiling has eigenvalues $t_x^2t_y=-1$ and $t_y^3=1$ and the green tiling has $t_x^2=-1$ and $t_y^3=1$. The momentum assigment is thus $(k_x,k_y)=(3,0), (9,0), (1,1), (7,1),(5,2),(11,2)$ for the red tiling and $(k_x,k_y)=(3,k),(9,k)$ with $k=0,1,2$ for the green tiling. Finally, there is a tiling which has red and green tiles alternating leading to a unit cell, RG, of 12 sites. The translation eigenvalues are $t_x^4 t_y=1$ and $t_y^3=1$, leading to the momentum assignment: $(k_x,k_y)=(3n-k \mod 12,k)$ with $k=0,1,2$ and $n=0,3,6,9$. 
We readily check that this argument correctly reproduces the degeneracies in all momentum sectors with the exception of the sectors $p_x=2\pi-p_y=2\pi k/3$ with $k=1,2$. In these two sectors the degeneracy is 3, whereas the tiling argument gives 4. It follows that of the 12 tilings of type RG only 10 correspond to ground states.

\subsection{Size $4\times L$}

We show the ground state degeneracy for the 4 leg ladder of length
$L$ for the non-trivial sectors with particle number $f$
and momenta $(p_{x},p_y)=(2\pi k_x/L,2\pi k_y/4)$. We also
indicate the total number of zero energy states, $N_{{\rm gs}}$,
and the Witten index, $W$. We present data for $L=4,\dots,12$.

\begin{footnotesize}

\begin{tabular}{cc|ccccc}
\hline
\multicolumn{7}{c}{$4 \times 4$}\\
\multicolumn{7}{c}{$N_{\rm{gs}}= 23$, $W=-23$}\\
\hline\hline
\multicolumn{7}{c}{$f=3$}\\
\cline{2-6}
& $(k_x,k_y)$ &0&1&2&3&\\
\cline{2-6}

   & 0 & 2 & 1 & 1 & 1 &   \\
   & 1 & 1 & 1 & 2 & 2 &   \\
   & 2 & 1 & 2 & 1 & 2 &   \\
   & 3 & 1 & 2 & 2 & 1 &  
\end{tabular}
\quad
\begin{tabular}{cc|ccccc}
\hline
\multicolumn{7}{c}{$4 \times 5$}\\
\multicolumn{7}{c}{$N_{\rm{gs}}= 11$, $W=11$}\\
\hline\hline
\multicolumn{7}{c}{$f=4$}\\
\cline{2-6}
& $(k_x,k_y)$ &0&1&2&3&\\
\cline{2-6}

   & 0 & 2 & 0 & 1 & 0 &   \\
   & 1 & 1 & 0 & 1 & 0 &   \\
   & 2 & 1 & 0 & 1 & 0 &   \\
   & 3 & 1 & 0 & 1 & 0 &   \\
   & 4 & 1 & 0 & 1 & 0 &  
\end{tabular}
\quad
\begin{tabular}{cc|cccccccc|ccccc}
\hline
\multicolumn{15}{c}{$4 \times 6$}\\
\multicolumn{15}{c}{$N_{\rm{gs}}= 29$, $W=25$}\\
\hline\hline
\multicolumn{7}{c}{$f=4$}& &\multicolumn{7}{c}{$f=5$}\\
\cline{2-6}\cline{10-14}
& $(k_x,k_y)$ &0&1&2&3& & & & $(k_x,k_y)$ &0&1&2&3&\\
\cline{2-6}\cline{10-14}

   & 0 & 2 & 1 & 3 & 1 &   &   &   & 0 & 0 & 0 & 1 & 0 &   \\
   & 1 & 1 & 1 & 0 & 1 &   &   &   & 1 & 0 & 0 & 0 & 0 &   \\
   & 2 & 2 & 0 & 3 & 0 &   &   &   & 2 & 0 & 0 & 0 & 0 &   \\
   & 3 & 1 & 1 & 1 & 1 &   &   &   & 3 & 0 & 0 & 1 & 0 &   \\
   & 4 & 2 & 0 & 3 & 0 &   &   &   & 4 & 0 & 0 & 0 & 0 &   \\
   & 5 & 1 & 1 & 0 & 1 &   &   &   & 5 & 0 & 0 & 0 & 0 &  
\end{tabular}

\vspace{1cm}

\begin{tabular}{cc|ccccc}
\hline
\multicolumn{7}{c}{$4 \times 7$}\\
\multicolumn{7}{c}{$N_{\rm{gs}}= 69$, $W=-69$}\\
\hline\hline
\multicolumn{7}{c}{$f=5$}\\
\cline{2-6}
& $(k_x,k_y)$ &0&1&2&3&\\
\cline{2-6}

   & 0 & 2 & 2 & 3 & 2 &   \\
   & 1 & 3 & 2 & 3 & 2 &   \\
   & 2 & 3 & 2 & 3 & 2 &   \\
   & 3 & 3 & 2 & 3 & 2 &   \\
   & 4 & 3 & 2 & 3 & 2 &   \\
   & 5 & 3 & 2 & 3 & 2 &   \\
   & 6 & 3 & 2 & 3 & 2 &  
\end{tabular}
\quad
\begin{tabular}{cc|ccccc}
\hline
\multicolumn{7}{c}{$4 \times 8$}\\
\multicolumn{7}{c}{$N_{\rm{gs}}= 193$, $W=193$}\\
\hline\hline
\multicolumn{7}{c}{$f=6$}\\
\cline{2-6}
& $(k_x,k_y)$ &0&1&2&3&\\
\cline{2-6}

   & 0 & 4 & 7 & 4 & 7 &   \\
   & 1 & 6 & 8 & 6 & 7 &   \\
   & 2 & 4 & 7 & 4 & 6 &   \\
   & 3 & 6 & 7 & 6 & 8 &   \\
   & 4 & 5 & 6 & 4 & 6 &   \\
   & 5 & 6 & 8 & 6 & 7 &   \\
   & 6 & 4 & 6 & 4 & 7 &   \\
   & 7 & 6 & 7 & 6 & 8 &  
\end{tabular}
\quad
\begin{tabular}{cc|cccccccc|ccccc}
\hline
\multicolumn{15}{c}{$4 \times 9$}\\
\multicolumn{15}{c}{$N_{\rm{gs}}= 151$, $W=-29$}\\
\hline\hline
\multicolumn{7}{c}{$f=6$}& &\multicolumn{7}{c}{$f=7$}\\
\cline{2-6}\cline{10-14}
& $(k_x,k_y)$ &0&1&2&3& & & & $(k_x,k_y)$ &0&1&2&3&\\
\cline{2-6}\cline{10-14}

   & 0 & 2 & 3 & 1 & 3 &   &   &   & 0 & 2 & 3 & 2 & 3 &   \\
   & 1 & 1 & 2 & 1 & 2 &   &   &   & 1 & 2 & 3 & 2 & 3 &   \\
   & 2 & 1 & 2 & 1 & 2 &   &   &   & 2 & 2 & 3 & 2 & 3 &   \\
   & 3 & 1 & 3 & 1 & 3 &   &   &   & 3 & 2 & 3 & 2 & 3 &   \\
   & 4 & 1 & 2 & 1 & 2 &   &   &   & 4 & 2 & 3 & 2 & 3 &   \\
   & 5 & 1 & 2 & 1 & 2 &   &   &   & 5 & 2 & 3 & 2 & 3 &   \\
   & 6 & 1 & 3 & 1 & 3 &   &   &   & 6 & 2 & 3 & 2 & 3 &   \\
   & 7 & 1 & 2 & 1 & 2 &   &   &   & 7 & 2 & 3 & 2 & 3 &   \\
   & 8 & 1 & 2 & 1 & 2 &   &   &   & 8 & 2 & 3 & 2 & 3 &  
\end{tabular}

\vspace{1cm}

\begin{tabular}{cc|cccccccc|ccccc}
\hline
\multicolumn{15}{c}{$4 \times 10$}\\
\multicolumn{15}{c}{$N_{\rm{gs}}= 293$, $W=-279$}\\
\hline\hline
\multicolumn{7}{c}{$f=7$}& &\multicolumn{7}{c}{$f=8$}\\
\cline{2-6}\cline{10-14}
& $(k_x,k_y)$ &0&1&2&3& & & & $(k_x,k_y)$ &0&1&2&3&\\
\cline{2-6}\cline{10-14}

   & 0 & 5 & 8 & 7 & 8 &   &   &   & 0 & 1 & 0 & 1 & 0 &   \\
   & 1 & 7 & 8 & 6 & 8 &   &   &   & 1 & 0 & 0 & 0 & 0 &   \\
   & 2 & 5 & 9 & 5 & 9 &   &   &   & 2 & 1 & 0 & 0 & 0 &   \\
   & 3 & 7 & 8 & 6 & 8 &   &   &   & 3 & 0 & 0 & 0 & 0 &   \\
   & 4 & 5 & 9 & 5 & 9 &   &   &   & 4 & 1 & 0 & 0 & 0 &   \\
   & 5 & 7 & 8 & 7 & 8 &   &   &   & 5 & 0 & 0 & 1 & 0 &   \\
   & 6 & 5 & 9 & 5 & 9 &   &   &   & 6 & 1 & 0 & 0 & 0 &   \\
   & 7 & 7 & 8 & 6 & 8 &   &   &   & 7 & 0 & 0 & 0 & 0 &   \\
   & 8 & 5 & 9 & 5 & 9 &   &   &   & 8 & 1 & 0 & 0 & 0 &   \\
   & 9 & 7 & 8 & 6 & 8 &   &   &   & 9 & 0 & 0 & 0 & 0 &  
\end{tabular}
\quad
\begin{tabular}{cc|ccccc}
\hline
\multicolumn{7}{c}{$4 \times 11$}\\
\multicolumn{7}{c}{$N_{\rm{gs}}= 859$, $W=859$}\\
\hline\hline
\multicolumn{7}{c}{$f=8$}\\
\cline{2-6}
& $(k_x,k_y)$ &0&1&2&3&\\
\cline{2-6}

   & 0 & 22 & 18 & 21 & 18 &   \\
   & 1 & 21 & 18 & 21 & 18 &   \\
   & 2 & 21 & 18 & 21 & 18 &   \\
   & 3 & 21 & 18 & 21 & 18 &   \\
   & 4 & 21 & 18 & 21 & 18 &   \\
   & 5 & 21 & 18 & 21 & 18 &   \\
   & 6 & 21 & 18 & 21 & 18 &   \\
   & 7 & 21 & 18 & 21 & 18 &   \\
   & 8 & 21 & 18 & 21 & 18 &   \\
   & 9 & 21 & 18 & 21 & 18 &   \\
   & 10 & 21 & 18 & 21 & 18 &  
\end{tabular}
\quad
\begin{tabular}{cc|cccccccc|ccccc}
\hline
\multicolumn{15}{c}{$4 \times 12$}\\
\multicolumn{15}{c}{$N_{\rm{gs}}= 1439$, $W=-1295$}\\
\hline\hline
\multicolumn{7}{c}{$f=8$}& &\multicolumn{7}{c}{$f=9$}\\
\cline{2-6}\cline{10-14}
& $(k_x,k_y)$ &0&1&2&3& & & & $(k_x,k_y)$ &0&1&2&3&\\
\cline{2-6}\cline{10-14}

   & 0 & 6 & 0 & 3 & 0 &   &   &   & 0 & 32 & 27 & 31 & 27 &   \\
   & 1 & 1 & 1 & 2 & 0 &   &   &   & 1 & 30 & 26 & 30 & 26 &   \\
   & 2 & 2 & 1 & 3 & 1 &   &   &   & 2 & 30 & 26 & 30 & 26 &   \\
   & 3 & 1 & 0 & 2 & 1 &   &   &   & 3 & 31 & 27 & 32 & 28 &   \\
   & 4 & 6 & 0 & 3 & 0 &   &   &   & 4 & 30 & 26 & 30 & 26 &   \\
   & 5 & 1 & 1 & 2 & 0 &   &   &   & 5 & 30 & 26 & 30 & 26 &   \\
   & 6 & 2 & 1 & 3 & 1 &   &   &   & 6 & 31 & 28 & 31 & 28 &   \\
   & 7 & 1 & 0 & 2 & 1 &   &   &   & 7 & 30 & 26 & 30 & 26 &   \\
   & 8 & 6 & 0 & 3 & 0 &   &   &   & 8 & 30 & 26 & 30 & 26 &   \\
   & 9 & 1 & 1 & 2 & 0 &   &   &   & 9 & 31 & 28 & 32 & 27 &   \\
   & 10 & 2 & 1 & 3 & 1 &   &   &   & 10 & 30 & 26 & 30 & 26 &   \\
   & 11 & 1 & 0 & 2 & 1 &   &   &   & 11 & 30 & 26 & 30 & 26 &  
\end{tabular}
\end{footnotesize}

\subsection{Size $5\times L$}

We show the ground state degeneracy for the 5 leg ladder of length
$L$ for the non-trivial sectors with particle number $f$
and momenta $(p_{x},p_y)=(2\pi k_x/L,2\pi k_y/5)$. We also
indicate the total number of zero energy states, $N_{{\rm gs}}$,
and the Witten index, $W$. We present data for $L=5,6,7$.

\begin{footnotesize}

\begin{tabular}{cc|ccccccccc|cccccc}
\hline
\multicolumn{17}{c}{$5 \times 5$}\\
\multicolumn{17}{c}{$N_{\rm{gs}}= 66$, $W=36$}\\
\hline\hline
\multicolumn{8}{c}{$f=4$}& &\multicolumn{8}{c}{$f=5$}\\
\cline{2-7}\cline{11-16}
& $(k_x,k_y)$ &0&1&2&3&4& & & & $(k_x,k_y)$ &0&1&2&3&4&\\
\cline{2-7}\cline{11-16}

   & 0 & 3 & 2 & 2 & 2 & 2 &   &   &   & 0 & 3 & 0 & 0 & 0 & 0 &   \\
   & 1 & 2 & 2 & 2 & 2 & 2 &   &   &   & 1 & 0 & 0 & 1 & 1 & 1 &   \\
   & 2 & 2 & 2 & 2 & 2 & 2 &   &   &   & 2 & 0 & 1 & 0 & 1 & 1 &   \\
   & 3 & 2 & 2 & 2 & 2 & 2 &   &   &   & 3 & 0 & 1 & 1 & 0 & 1 &   \\
   & 4 & 2 & 2 & 2 & 2 & 2 &   &   &   & 4 & 0 & 1 & 1 & 1 & 0 &  
\end{tabular}

\vspace{1cm}

\begin{tabular}{cc|ccccccccc|cccccc}
\hline
\multicolumn{17}{c}{$5 \times 6$}\\
\multicolumn{17}{c}{$N_{\rm{gs}}= 55$, $W=-49$}\\
\hline\hline
\multicolumn{8}{c}{$f=5$}& &\multicolumn{8}{c}{$f=6$}\\
\cline{2-7}\cline{11-16}
& $(k_x,k_y)$ &0&1&2&3&4& & & & $(k_x,k_y)$ &0&1&2&3&4&\\
\cline{2-7}\cline{11-16}

   & 0 & 1 & 1 & 1 & 1 & 1 &   &   &   & 0 & 1 & 0 & 0 & 0 & 0 &   \\
   & 1 & 2 & 2 & 2 & 2 & 2 &   &   &   & 1 & 0 & 0 & 0 & 0 & 0 &   \\
   & 2 & 2 & 2 & 2 & 2 & 2 &   &   &   & 2 & 0 & 0 & 0 & 0 & 0 &   \\
   & 3 & 3 & 1 & 1 & 1 & 1 &   &   &   & 3 & 2 & 0 & 0 & 0 & 0 &   \\
   & 4 & 2 & 2 & 2 & 2 & 2 &   &   &   & 4 & 0 & 0 & 0 & 0 & 0 &   \\
   & 5 & 2 & 2 & 2 & 2 & 2 &   &   &   & 5 & 0 & 0 & 0 & 0 & 0 &  
\end{tabular}

\vspace{1cm}

\begin{tabular}{cc|ccccccccc|cccccc}
\hline
\multicolumn{17}{c}{$5 \times 7$}\\
\multicolumn{17}{c}{$N_{\rm{gs}}= 215$, $W=211$}\\
\hline\hline
\multicolumn{8}{c}{$f=6$}& &\multicolumn{8}{c}{$f=7$}\\
\cline{2-7}\cline{11-16}
& $(k_x,k_y)$ &0&1&2&3&4& & & & $(k_x,k_y)$ &0&1&2&3&4&\\
\cline{2-7}\cline{11-16}

   & 0 & 9 & 6 & 6 & 6 & 6 &   &   &   & 0 & 2 & 0 & 0 & 0 & 0 &   \\
   & 1 & 6 & 6 & 6 & 6 & 6 &   &   &   & 1 & 0 & 0 & 0 & 0 & 0 &   \\
   & 2 & 6 & 6 & 6 & 6 & 6 &   &   &   & 2 & 0 & 0 & 0 & 0 & 0 &   \\
   & 3 & 6 & 6 & 6 & 6 & 6 &   &   &   & 3 & 0 & 0 & 0 & 0 & 0 &   \\
   & 4 & 6 & 6 & 6 & 6 & 6 &   &   &   & 4 & 0 & 0 & 0 & 0 & 0 &   \\
   & 5 & 6 & 6 & 6 & 6 & 6 &   &   &   & 5 & 0 & 0 & 0 & 0 & 0 &   \\
   & 6 & 6 & 6 & 6 & 6 & 6 &   &   &   & 6 & 0 & 0 & 0 & 0 & 0 &  
\end{tabular}
\end{footnotesize}

\subsection{Size $6\times L$}

We show the ground state degeneracy for the 6 leg ladder of length
$L$ for the non-trivial sectors with particle number $f$
and momenta $(p_{x},p_y)=(2\pi k_x/L,2\pi k_y/6)$. We also
indicate the total number of zero energy states, $N_{{\rm gs}}$,
and the Witten index, $W$. We present data for $L=6,7,8,9$.

\begin{footnotesize}

\begin{tabular}{cc|cccccccccc|ccccccc}
\hline
\multicolumn{19}{c}{$6 \times 6$}\\
\multicolumn{19}{c}{$N_{\rm{gs}}= 184$, $W=-102$}\\
\hline\hline
\multicolumn{9}{c}{$f=6$}& &\multicolumn{9}{c}{$f=7$}\\
\cline{2-8}\cline{12-18}
& $(k_x,k_y)$ &0&1&2&3&4&5& & & & $(k_x,k_y)$ &0&1&2&3&4&5&\\
\cline{2-8}\cline{12-18}

   & 0 & 2 & 2 & 0 & 5 & 0 & 2 &   &   &   & 0 & 6 & 4 & 4 & 5 & 4 & 4 &   \\
   & 1 & 2 & 2 & 0 & 1 & 1 & 0 &   &   &   & 1 & 4 & 4 & 3 & 4 & 4 & 3 &   \\
   & 2 & 0 & 0 & 0 & 1 & 0 & 1 &   &   &   & 2 & 4 & 3 & 4 & 4 & 4 & 4 &   \\
   & 3 & 5 & 1 & 1 & 5 & 1 & 1 &   &   &   & 3 & 5 & 4 & 4 & 5 & 4 & 4 &   \\
   & 4 & 0 & 1 & 0 & 1 & 0 & 0 &   &   &   & 4 & 4 & 4 & 4 & 4 & 4 & 3 &   \\
   & 5 & 2 & 0 & 1 & 1 & 0 & 2 &   &   &   & 5 & 4 & 3 & 4 & 4 & 3 & 4 &  
\end{tabular}
\quad
\begin{tabular}{cc|cccccccccc|ccccccc}
\hline
\multicolumn{19}{c}{$6 \times 7$}\\
\multicolumn{19}{c}{$N_{\rm{gs}}= 269$, $W=-13$}\\
\hline\hline
\multicolumn{9}{c}{$f=7$}& &\multicolumn{9}{c}{$f=8$}\\
\cline{2-8}\cline{12-18}
& $(k_x,k_y)$ &0&1&2&3&4&5& & & & $(k_x,k_y)$ &0&1&2&3&4&5&\\
\cline{2-8}\cline{12-18}

   & 0 & 4 & 3 & 3 & 5 & 3 & 3 &   &   &   & 0 & 4 & 3 & 3 & 4 & 3 & 3 &   \\
   & 1 & 4 & 3 & 3 & 4 & 3 & 3 &   &   &   & 1 & 3 & 3 & 3 & 3 & 3 & 3 &   \\
   & 2 & 4 & 3 & 3 & 4 & 3 & 3 &   &   &   & 2 & 3 & 3 & 3 & 3 & 3 & 3 &   \\
   & 3 & 4 & 3 & 3 & 4 & 3 & 3 &   &   &   & 3 & 3 & 3 & 3 & 3 & 3 & 3 &   \\
   & 4 & 4 & 3 & 3 & 4 & 3 & 3 &   &   &   & 4 & 3 & 3 & 3 & 3 & 3 & 3 &   \\
   & 5 & 4 & 3 & 3 & 4 & 3 & 3 &   &   &   & 5 & 3 & 3 & 3 & 3 & 3 & 3 &   \\
   & 6 & 4 & 3 & 3 & 4 & 3 & 3 &   &   &   & 6 & 3 & 3 & 3 & 3 & 3 & 3 &  
\end{tabular}

\vspace{1cm}

\begin{tabular}{cc|cccccccccc|ccccccc}
\hline
\multicolumn{19}{c}{$6 \times 8$}\\
\multicolumn{19}{c}{$N_{\rm{gs}}= 619$, $W=-415$}\\
\hline\hline
\multicolumn{9}{c}{$f=8$}& &\multicolumn{9}{c}{$f=9$}\\
\cline{2-8}\cline{12-18}
& $(k_x,k_y)$ &0&1&2&3&4&5& & & & $(k_x,k_y)$ &0&1&2&3&4&5&\\
\cline{2-8}\cline{12-18}

   & 0 & 5 & 2 & 4 & 3 & 4 & 2 &   &   &   & 0 & 13 & 10 & 10 & 13 & 10 & 10 &   \\
   & 1 & 2 & 1 & 2 & 1 & 2 & 1 &   &   &   & 1 & 12 & 10 & 10 & 12 & 10 & 10 &   \\
   & 2 & 2 & 2 & 2 & 2 & 2 & 2 &   &   &   & 2 & 11 & 10 & 10 & 12 & 10 & 10 &   \\
   & 3 & 2 & 1 & 2 & 1 & 2 & 1 &   &   &   & 3 & 12 & 10 & 10 & 12 & 10 & 10 &   \\
   & 4 & 7 & 1 & 5 & 3 & 5 & 1 &   &   &   & 4 & 15 & 10 & 10 & 14 & 10 & 10 &   \\
   & 5 & 2 & 1 & 2 & 1 & 2 & 1 &   &   &   & 5 & 12 & 10 & 10 & 12 & 10 & 10 &   \\
   & 6 & 2 & 2 & 2 & 2 & 2 & 2 &   &   &   & 6 & 11 & 10 & 10 & 12 & 10 & 10 &   \\
   & 7 & 2 & 1 & 2 & 1 & 2 & 1 &   &   &   & 7 & 12 & 10 & 10 & 12 & 10 & 10 &  
\end{tabular}
\quad
\begin{tabular}{cc|cccccccccc|ccccccc}
\hline
\multicolumn{19}{c}{$6 \times 9$}\\
\multicolumn{19}{c}{$N_{\rm{gs}}= 1926$, $W=1462$}\\
\hline\hline
\multicolumn{9}{c}{$f=9$}& &\multicolumn{9}{c}{$f=10$}\\
\cline{2-8}\cline{12-18}
& $(k_x,k_y)$ &0&1&2&3&4&5& & & & $(k_x,k_y)$ &0&1&2&3&4&5&\\
\cline{2-8}\cline{12-18}

   & 0 & 9 & 3 & 3 & 9 & 3 & 3 &   &   &   & 0 & 33 & 31 & 31 & 33 & 31 & 31 &   \\
   & 1 & 4 & 5 & 3 & 4 & 5 & 3 &   &   &   & 1 & 32 & 31 & 31 & 32 & 31 & 31 &   \\
   & 2 & 4 & 3 & 5 & 4 & 3 & 5 &   &   &   & 2 & 32 & 31 & 31 & 32 & 31 & 31 &   \\
   & 3 & 7 & 4 & 4 & 7 & 3 & 4 &   &   &   & 3 & 32 & 32 & 31 & 32 & 30 & 31 &   \\
   & 4 & 4 & 5 & 3 & 4 & 5 & 3 &   &   &   & 4 & 32 & 31 & 31 & 32 & 31 & 31 &   \\
   & 5 & 4 & 3 & 5 & 4 & 3 & 5 &   &   &   & 5 & 32 & 31 & 31 & 32 & 31 & 31 &   \\
   & 6 & 7 & 4 & 3 & 7 & 4 & 4 &   &   &   & 6 & 32 & 31 & 30 & 32 & 31 & 32 &   \\
   & 7 & 4 & 5 & 3 & 4 & 5 & 3 &   &   &   & 7 & 32 & 31 & 31 & 32 & 31 & 31 &   \\
   & 8 & 4 & 3 & 5 & 4 & 3 & 5 &   &   &   & 8 & 32 & 31 & 31 & 32 & 31 & 31 &  
\end{tabular}
\end{footnotesize}


\subsection{Size $7\times L$}

We show the ground state degeneracy for the system of size $7\times 7$ for the non-trivial sectors with particle number $f$
and momenta $(p_{x},p_y)=(2\pi k_x/7,2\pi k_y/7)$. We also
indicate the total number of zero energy states, $N_{{\rm gs}}$,
and the Witten index, $W$.

\begin{footnotesize}

\begin{tabular}{cc|ccccccccccc|ccccccccccc|cccccccc}
\hline
\multicolumn{31}{c}{$7 \times 7$}\\
\multicolumn{31}{c}{$N_{\rm{gs}}= 1193$, $W=-797$}\\
\hline\hline
\multicolumn{10}{c}{$f=7$}& &\multicolumn{10}{c}{$f=8$}& &\multicolumn{10}{c}{$f=9$}\\
\cline{2-9}\cline{13-20}\cline{24-31}
& $(k_x,k_y)$ &0&1&2&3&4&5&6& & & & $(k_x,k_y)$ &0&1&2&3&4&5&6& & & & $(k_x,k_y)$ &0&1&2&3&4&5&6&\\
\cline{2-9}\cline{13-20}\cline{24-31}

   & 0 & 2 & 0 & 0 & 0 & 0 & 0 & 0 &   &   &   & 0 & 6 & 4 & 4 & 4 & \
4 & 4 & 4 &   &   &   & 0  & 21 & 20 & 20 & 20 & 20 & 20 & 20&  \\
   & 1 & 0 & 0 & 0 & 1 & 0 & 1 & 0 &   &   &   & 1 & 4 & 4 & 4 & 4 & \
4 & 4 & 4 &    &   &   & 1  & 20 & 20 & 20 & 20 & 20 & 20 & 20&  \\
   & 2 & 0 & 0 & 0 & 1 & 0 & 0 & 1 &   &   &   & 2 & 4 & 4 & 4 & 4 & \
4 & 4 & 4 &    &   &   & 2 & 20 & 20 & 20 & 20 & 20 & 20 & 20& \\
   & 3 & 0 & 1 & 1 & 0 & 0 & 0 & 0 &   &   &   & 3 & 4 & 4 & 4 & 4 & \
4 & 4 & 4 &    &   &   & 3 & 20 & 20 & 20 & 20 & 20 & 20 & 20& \\
   & 4 & 0 & 0 & 0 & 0 & 0 & 1 & 1 &   &   &   & 4 & 4 & 4 & 4 & 4 & \
4 & 4 & 4 &     &   &   & 4  & 20 & 20 & 20 & 20 & 20 & 20 & 20&\\
   & 5 & 0 & 1 & 0 & 0 & 1 & 0 & 0 &   &   &   & 5 & 4 & 4 & 4 & 4 & \
4 & 4 & 4 &    &   &   & 5  & 20 & 20 & 20 & 20 & 20 & 20 & 20& \\
   & 6 & 0 & 0 & 1 & 0 & 1 & 0 & 0 &   &   &   & 6 & 4 & 4 & 4 & 4 & \
4 & 4 & 4 &     &   &   & 6  & 20 & 20 & 20 & 20 & 20 & 20 & 20&
\end{tabular}
\end{footnotesize}

\end{document}